\documentclass[sigconf,screen]{acmart}

\renewcommand\footnotetextcopyrightpermission[1]{}
\acmDOI{}
\acmISBN{}
\acmConference[]{}{}{}
\acmYear{2026}
\acmPrice{}

\usepackage{listings}
\usepackage{tikz}
\usetikzlibrary{arrows.meta,positioning,decorations.markings}

\newcommand{\dotop}{\cdot}
\newcommand{\FRM}{\textsf{FRM}}
\newcommand{\ICP}{\textsf{ICP}}
\newcommand{\Fin}{\mathrm{Fin}}
\newcommand{\core}{\mathrm{core}}

\theoremstyle{remark}
\newtheorem{remark}{Remark}

\begin{document}

\title{Pairwise Independence of Representation, Classification, and Composition in Finite Extensional Magmas}
\thanks{Preprint version. The Lean proofs, SAT search scripts, and manuscript text were developed with assistance from Claude (Anthropic). All results have been independently verified: Lean proofs compile with \texttt{lake build} (zero \texttt{sorry}), and SAT counterexamples are frozen in the repository for reproducibility. The author takes full responsibility for all contents.}

\author{Stefano Palmieri}
\affiliation{
  \institution{Independent Researcher}
  \city{Austin}
  \state{Texas}
  \country{USA}
}

\keywords{finite extensional magma, pairwise independence, classifier dichotomy, internal composition, Lean~4, machine-checked proofs}

\begin{abstract}
Nontrivial combinatory algebras with S and K must be infinite. Associativity is incompatible with combining a classifier and a retraction pair in a finite extensional magma. These obstructions exclude several standard settings from the finite extensional framework studied here, most notably nontrivial finite S+K-style combinatory algebras and associative structures (semigroups, monoids, groups, rings) carrying both a classifier and a retraction pair. What algebraic structure exists in the remaining landscape: finite, non-associative, total?

We identify three properties of finite extensional 2-pointed magmas: self-representation~(R), the classifier dichotomy~(D), and the Internal Composition Property~(H). We prove they are pairwise independent. Lean-verified finite counterexamples at sizes $4$ through $10$ establish all six non-implications, four with provably tight bounds. The minimum coexistence witness has $N\!=\!5$, which is optimal: ICP requires 3 pairwise distinct core elements, so $N \geq 5$. The three-category decomposition induced by D is an isomorphism invariant, and the ICP is logically equivalent to the standard Compose+Inert axioms. All results are formalized in Lean~4 with zero \texttt{sorry}.
\end{abstract}

\maketitle

% ══════════════════════════════════════════════════════════════
\section{Introduction}
\label{sec:introduction}
% ══════════════════════════════════════════════════════════════

A finite magma is the Cayley table of a binary operation, stripped of syntax and types. Reflective systems combine encoding, classification, and composition~\cite{smith84,brown16}. Are these algebraically independent, or does one force another? We study this question in finite extensional 2-pointed magmas (a finite set with a binary operation, exactly two left-absorbers, and extensionality; the framework follows universal algebra~\cite{burris81}).

Two obstructions confirm this is the right setting. Nontrivial combinatory algebras with $\mathbf{s}$ and $\mathbf{k}$ are necessarily infinite (Theorem~\ref{thm:k-infinity}), ruling out the most direct finite analogues of PCA/TCA-style~\cite{vanoosten08,longley15,bethke88} and Turing-object-style~\cite{cockett08} universality. Associativity is incompatible with combining a classifier and a retraction pair (Theorem~\ref{thm:no-assoc}), excluding semigroups, monoids, groups, and rings. Finite extensional 2-pointed magmas are what remains. We identify three algebraic properties corresponding to these components: self-representation~(R), the existence of a section-retraction pair on the non-absorber elements (the \emph{core}); the classifier dichotomy~(D), a clean partition of core elements into those whose action on core lands entirely in the absorbers and those whose action avoids them entirely; and the Internal Composition Property~(H), the existence of a non-trivial factorization in the left-regular representation restricted to core.

Our main result is that R, D, and H are pairwise independent: no one implies any other. Lean-verified finite counterexamples at sizes $N\!=\!4, 5, 6, 8,$ and $10$ establish all six non-implications; coexistence witnesses at $N\!=\!5$ and $N\!=\!6$ show all three can hold simultaneously. The $N\!=\!5$ bound is optimal: ICP requires 3 pairwise distinct core elements, forcing $|S| \geq 5$. Each counterexample is a concrete Cayley table found by SAT search and independently verified in Lean~4~\cite{moura21}, zero \texttt{sorry}. Whether the independence extends to $\lambda$-calculus or typed settings is an open question (Section~\ref{sec:discussion}).

Beyond independence, we prove several additional structural results. The no-associativity theorem (Theorem~\ref{thm:no-assoc}) rules out the classical algebraic hierarchy: no extensional 2-pointed magma carrying both a classifier and a retraction pair has a right identity or is associative. The three-category decomposition induced by~D is an algebraic invariant (Theorem~\ref{thm:functoriality}), and the ICP is equivalent to the standard Compose+Inert axioms (Theorem~\ref{thm:icp-equiv}).

\paragraph{Contributions.}
\begin{enumerate}
  \item Universal algebraic obstructions that motivate the setting: K-infinity rules out finite combinatory algebras; no-associativity rules out the classical algebraic hierarchy (Section~\ref{sec:universal}).
  \item An independence theorem: R, D, and H are pairwise independent, with Lean-verified counterexamples in all six directions at $N\!=\!4$ through $N\!=\!10$ (Section~\ref{sec:independence}), and coexistence witnesses at $N\!=\!5$ and $N\!=\!6$ (Section~\ref{sec:bound}).
  \item An optimal bound: $N\!=\!5$ is the minimum carrier size for R+D+H coexistence. ICP is impossible at $N\!=\!4$ by pigeonhole (Theorem~\ref{thm:no-icp-4}).
  \item Decomposition invariance: the three-category partition is preserved by isomorphisms (Section~\ref{sec:invariance}).
  \item A characterization of H confirming the ICP is not ad hoc: it is logically equivalent to Compose+Inert under variable renaming, with necessity of the non-degeneracy conditions demonstrated by Lean-verified witnesses (Section~\ref{sec:icp}).
\end{enumerate}

% ══════════════════════════════════════════════════════════════
\section{Definitions}
\label{sec:definitions}
% ══════════════════════════════════════════════════════════════

Throughout, $(S, \dotop)$ is a finite magma on $\Fin(n) = \{0, 1, \ldots, n-1\}$ with binary operation $\dotop : \Fin(n) \times \Fin(n) \to \Fin(n)$.

\begin{definition}[Extensional 2-Pointed Magma]
\label{def:epm}
An \emph{extensional 2-pointed magma} is a tuple $(S, \dotop, z_1, z_2)$ where:
\begin{enumerate}
  \item $z_1, z_2 \in S$ are distinct \emph{left-absorbers}: $z_i \dotop x = z_i$ for all $x$.
  \item No other element is a left-absorber.
  \item \emph{Extensionality}: if $a \dotop x = b \dotop x$ for all $x$, then $a = b$.
\end{enumerate}
We write $\core(S) := S \setminus \{z_1, z_2\}$ for the non-absorber elements.
\end{definition}

\begin{definition}[Faithful Retract Magma]
\label{def:frm}
A \emph{Faithful Retract Magma} $(\FRM)$ extends an extensional 2-pointed magma with a \emph{retraction pair} $s, r \in \core(S)$ satisfying $r \dotop (s \dotop x) = x$ for all $x \in \core(S)$, with $r \dotop z_1 = z_1$ (anchoring). The remaining absorber outputs of $s$ and $r$ are unconstrained.
\end{definition}

\begin{remark}[Retraction pair convention]
\label{rem:mutual}
Throughout this paper, the retraction pair $(s, r)$ satisfies mutual inverse on core: $r \dotop (s \dotop x) = x$ and $s \dotop (r \dotop x) = x$ for all $x \in \core(S)$. This matches the Lean formalization. Most theorems (9 of 11) require only the one-sided condition $r \dotop (s \dotop x) = x$; two results, retraction pair placement (both $s, r \in N$; Theorem~\ref{thm:placement}) and the tight bound $|S| \geq 5$ when $s \neq r$ (Theorem~\ref{thm:card}), use the full mutual inverse. We adopt the stronger convention for uniformity and note the weaker hypotheses where relevant.
\end{remark}

The retraction pair provides a section/retraction structure on core: $s$ acts as a section and $r$ as a retraction. In finite witnesses this can be used to organize elements into coding schemes, but we do not assume a canonical global encoding.

\begin{definition}[Classifier Dichotomy]
\label{def:cd}
An extensional 2-pointed magma satisfies the \emph{classifier dichotomy} when:
\begin{enumerate}
  \item A \emph{classifier} $\tau \in \core(S)$ exists with $\tau \dotop x \in \{z_1, z_2\}$ for all $x$.
  \item For every $y \in \core(S)$, either
    \begin{itemize}
      \item $y \dotop x \in \{z_1, z_2\}$ for all $x \in \core(S)$ \quad ($y$ is a \emph{classifier}), or
      \item $y \dotop x \notin \{z_1, z_2\}$ for all $x \in \core(S)$ \quad ($y$ is a \emph{non-classifier}).
    \end{itemize}
  \item At least one non-classifier exists (excluding the degenerate case where the ``dichotomy'' has only one inhabited class).
\end{enumerate}
\end{definition}

The dichotomy creates a three-category decomposition $S = Z \sqcup C \sqcup N$ where $Z = \{z_1, z_2\}$ (absorbers), $C$ (classifiers), and $N$ (non-classifiers). No element can be partially classifying. We call outputs in $\{z_1, z_2\}$ \emph{absorber-valued}.

\begin{definition}[Internal Composition Property]
\label{def:icp}
A magma $(S, \dotop, z_1, z_2)$ satisfies the \emph{Internal Composition Property} $(\ICP)$ when:
\[
\exists\, a, b, c \in \core(S),\ \text{pairwise distinct, such that:}
\]
\begin{enumerate}
  \item \emph{Core-preservation}: $b \dotop x \notin \{z_1, z_2\}$ for all $x \in \core(S)$.
  \item \emph{Factorization}: $a \dotop x = c \dotop (b \dotop x)$ for all $x \in \core(S)$.
  \item \emph{Non-triviality}: $|\{a \dotop x : x \in \core(S)\}| \geq 2$.
\end{enumerate}
\end{definition}

ICP says: the left-regular representation admits a non-trivial factorization on core. One core action factors through two others, one of them core-preserving: a single fragment of internal composition. In Section~\ref{sec:icp} we prove that $\ICP$ is equivalent to the standard operational axioms Compose+Inert (Theorem~\ref{thm:icp-equiv}).

\begin{definition}[Capability R]
An extensional 2-pointed magma has capability $R$ (self-representation) if it admits a retraction pair.
\end{definition}

\begin{definition}[Capability D]
An extensional 2-pointed magma has capability $D$ (self-description) if it satisfies the classifier dichotomy (Definition~\ref{def:cd}: existence of a classifier, the global dichotomy, and non-degeneracy).
\end{definition}

\begin{definition}[Capability H]
An extensional 2-pointed magma has capability $H$ (self-execution) if it satisfies the $\ICP$.
\end{definition}

All three capabilities are properties of the same base structure (Definition~\ref{def:epm}). None is defined as an extension of any other, making the independence theorem (Section~\ref{sec:independence}) a statement about six non-implications, all proved by Lean-verified counterexamples. The labels ``self-representation,'' ``self-description,'' and ``self-execution'' are heuristic names for the formal properties R, D, and H; no claim is made that these structures by themselves implement a full internal evaluator.

% ══════════════════════════════════════════════════════════════
\section{Universal Theorems}
\label{sec:universal}
% ══════════════════════════════════════════════════════════════

The following theorems are proved by pure algebraic reasoning (no \texttt{decide}) and formalized in \texttt{CatKripkeWallMinimal.lean} and \texttt{NoCommutativity.lean}. We state the hypotheses explicitly: some require only D, while others use both R and D.

\subsection{Theorems from D alone}

\begin{theorem}[Three-Category Decomposition]
\label{thm:three-cat}
In any extensional 2-pointed magma satisfying D, every element is a zero, a classifier, or a non-classifier. The classes are pairwise disjoint: in particular, $C \cap N = \emptyset$ (no element is partially classifying).
\end{theorem}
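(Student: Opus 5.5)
The proof is a direct unfolding of Definition~\ref{def:cd}, arranged in two steps. Coverage comes first. Fix an element $y \in S$. If $y \in \{z_1, z_2\}$, it lies in $Z$. Otherwise $y \in \core(S)$, and clause~(2) of the classifier dichotomy says that either $y \dotop x \in \{z_1,z_2\}$ for all $x \in \core(S)$, or $y \dotop x \notin \{z_1,z_2\}$ for all $x \in \core(S)$. In the first case $y \in C$; in the second, $y \in N$. Hence $S = Z \cup C \cup N$. Clause~(2) is therefore the entire content of the covering statement.

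Disjointness comes second, in three parts.
\begin{itemize}
  \item $Z \cap (C \cup N) = \emptyset$ holds because $C$ and $N$ are defined as subsets of $\core(S) = S \setminus \{z_1,z_2\}$.
  \item For $C \cap N = \emptyset$, suppose $y$ is both a classifier and a non-classifier. We need a witness $x \in \core(S)$ with $y \dotop x \in \{z_1,z_2\}$ and also $y \dotop x \notin \{z_1,z_2\}$, which is a contradiction.
  \item The only subtlety is that $\core(S)$ must be nonempty, since otherwise both universal conditions hold vacuously. This is where clause~(3) of Definition~\ref{def:cd} enters: a non-classifier exists and lies in $\core(S)$, so $\core(S) \neq \emptyset$. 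Clause~(1) would also suffice, since the classifier $\tau$ lies in $\core(S)$. We take $x$ to be any such core element.
\end{itemize}

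I expect no real obstacle. The one step requiring care is the vacuous-truth issue in the disjointness argument, and we settle it by extracting a core element from the existence clauses of D. In the Lean formalization this amounts to case analysis on membership in $\{z_1,z_2\}$, followed by specializing the two universal statements at that witness. Extensionality and the uniqueness of the absorbers in Definition~\ref{def:epm} are not needed.
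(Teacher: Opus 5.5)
Your proof is correct and matches the paper's approach. The paper gives no written proof, only noting that the theorem is proved by pure algebraic reasoning in \texttt{CatKripkeWallMinimal.lean}, and your unfolding of clause~(2) of Definition~\ref{def:cd} for coverage, plus a core witness from clause~(1) or~(3) to block vacuous overlap of $C$ and $N$, is exactly that argument; you are also right that extensionality is not needed (a constant-row core element is a classifier or a non-classifier but never both once $\core(S) \neq \emptyset$, despite the Discussion's remark to the contrary), and the uniqueness clause of Definition~\ref{def:epm} would matter only if ``zero'' were read as ``left-absorber'' rather than as membership in $Z = \{z_1, z_2\}$.
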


\begin{theorem}[Asymmetry]
\label{thm:asymmetry}
No extensional magma with two distinct left-absorbers is commutative.
\end{theorem}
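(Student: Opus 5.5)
The plan is a direct contradiction argument that uses only the two left-absorbers; extensionality is not even needed. Suppose $(S,\dotop)$ is commutative and has distinct left-absorbers $z_1 \neq z_2$. I would evaluate the single product $z_1 \dotop z_2$ in two ways.

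First, $z_1$ is a left-absorber, so $z_1 \dotop z_2 = z_1$. Second, commutativity gives $z_1 \dotop z_2 = z_2 \dotop z_1$. Since $z_2$ is also a left-absorber, $z_2 \dotop z_1 = z_2$. Chaining these equalities yields $z_1 = z_2$, which contradicts the distinctness of the two absorbers. Hence no such magma is commutative.

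The argument needs only that the two distinguished left-absorbers are distinct, so the hypotheses could be weakened. It does not use finiteness, extensionality, or the ``no other left-absorber'' clause of Definition~\ref{def:epm}. I would note this in a remark, since the theorem is stated for extensional magmas only for uniformity with the rest of the section.

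I do not expect a real obstacle. The only care needed is to apply absorption on the correct side: a left-absorber satisfies $z \dotop x = z$, so each absorber must appear as the \emph{left} factor when its absorption law is used. That is exactly what commutativity lets us arrange. In Lean this is a three-line \texttt{calc}, or an application of \texttt{absurd} to \texttt{(h1 z2).symm.trans ((comm z1 z2).trans (h2 z1))}.
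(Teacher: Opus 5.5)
Your argument is correct: the chain $z_1 = z_1 \dotop z_2 = z_2 \dotop z_1 = z_2$ is the natural purely equational proof. The paper writes out no proof of this theorem and only cites an algebraic Lean proof in \texttt{NoCommutativity.lean}, so your argument supplies the missing details. Your remark that extensionality, finiteness, and the ``no other left-absorber'' clause are never used is accurate.
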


\subsection{Theorems from R+D}

\begin{theorem}[Retraction Pair Placement]
\label{thm:placement}
In any extensional 2-pointed magma satisfying both R and D, any witness $r$ to R is a non-classifier: $r \in N$. (Under the stronger mutual-inverse condition in the Lean formalization, $s \in N$ also holds.)
\end{theorem}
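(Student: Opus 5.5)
Since $r \in \core(S)$, Theorem~\ref{thm:three-cat} puts $r$ in exactly one of $C$ and $N$. So it suffices to rule out $r \in C$, and then we get $r \in N$.

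\textbf{Step 1: $r \in C$ is impossible.} Suppose $r$ is a classifier, so $r \dotop y \in \{z_1, z_2\}$ for every $y$. The core is nonempty, since D requires a classifier $\tau \in \core(S)$. Pick any $x \in \core(S)$, for instance $x = \tau$. The retraction law gives $x = r \dotop (s \dotop x)$. The right-hand side is an output of $r$, so it lies in $\{z_1, z_2\}$. Hence $x \in \{z_1,z_2\}$, contradicting $x \in \core(S)$. Therefore $r \in N$. This direction needs only the one-sided condition of Definition~\ref{def:frm}, as Remark~\ref{rem:mutual} indicates.

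\textbf{Step 2: $s \in N$ under mutual inverse.} The argument is symmetric. Under Remark~\ref{rem:mutual} we also have $s \dotop (r \dotop x) = x$ for all $x \in \core(S)$. If $s$ were a classifier, then for any core $x$ the value $x = s \dotop (r \dotop x)$ would be absorber-valued, which is the same contradiction. So $s \in N$.

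There is an alternative route to $s \in N$ using only the one-sided condition together with the dichotomy. One would show that $s \dotop x$ is core for every core $x$. That argument has to handle the case $s \dotop x \in Z$: then $r \dotop (s \dotop x)$ equals $z_1$ or $z_2$, because absorbers are left-absorbing only in the sense $z_i \dotop y = z_i$, and what we need here is the value $r \dotop z_i$, which is unconstrained apart from $r \dotop z_1 = z_1$. The potential case $s \dotop x = z_2$ with $r \dotop z_2 = x$ is exactly why the stronger hypothesis is used. I therefore expect this subtlety to be the only real obstacle, and I would resolve it by invoking the mutual-inverse convention rather than attempting the one-sided proof.
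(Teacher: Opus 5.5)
\textbf{Gap in Step~1.} Your Step~1 uses a stronger property of $r$ than $r \in C$ provides. In Definition~\ref{def:cd}, only the distinguished $\tau$ of item~1 must be absorber-valued on every input. Membership in the class $C$ (item~2) says only that $r \dotop y \in \{z_1,z_2\}$ for $y \in \core(S)$. So from $x = r \dotop (s \dotop x)$ you may conclude $x \in \{z_1,z_2\}$ only when $s \dotop x \in \core(S)$. Nothing you have used excludes $s \dotop \tau = z_2$ with $r \dotop z_2 = \tau$; the retraction law at $\tau$ is then satisfied while your inference has no content. Your argument never touches the anchoring condition $r \dotop z_1 = z_1$, and that condition is the natural tool for the missing case.

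\textbf{Repairing Step~1.} Suppose $x$ is a core element.
\begin{itemize}
\item If $s \dotop x = z_1$, then $x = r \dotop z_1 = z_1$, which is impossible.
\item If $s \dotop x = z_2$, then $x = r \dotop z_2$, and this can hold for at most one core $x$.
\end{itemize}
D supplies two distinct core elements: $\tau$ and a non-classifier $n$. They are distinct because $\tau \dotop \tau \in \{z_1,z_2\}$ while $n \dotop \tau \notin \{z_1,z_2\}$. Hence some $x \in \{\tau, n\}$ has $s \dotop x \in \core(S)$. For that $x$, the equation $r \dotop (s \dotop x) = x \notin \{z_1,z_2\}$ shows $r \notin C$, so $r \in N$ by the dichotomy.

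\textbf{Gap in Step~2.} Step~2 has the same flaw. The value $s \dotop (r \dotop x)$ is forced to be absorber-valued only if $r \dotop x \in \core(S)$. The definition leaves $s \dotop z_1$ and $s \dotop z_2$ unconstrained. The repair is to invoke the corrected conclusion of Step~1: $r \in N$ means $r \dotop x \in \core(S)$ for every core $x$, and then your argument goes through.

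\textbf{On your side remark.} The obstacle you describe for the one-sided route to $s \in N$ is exactly the case Step~1 skipped, and it dissolves the same way. To rule out $s \in C$, the dichotomy needs only one core $x$ with $s \dotop x \in \core(S)$, not all of them. The anchoring-plus-pigeonhole argument above supplies such an $x$. So $s \in N$ also follows from the one-sided condition with anchoring. The mutual-inverse hypothesis is a convenience here, not a necessity.
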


\begin{theorem}[Cardinality Bounds]
\label{thm:card}
$|S| \geq 4$ for any extensional 2-pointed magma satisfying both R and D. (The bound $|S| \geq 4$ is nearly immediate from the definitions, since the four roles (two absorbers, one classifier, one non-classifier) must be distinct; the Lean-verified proof confirms that no subtle interaction reduces it. The bound is tight: a 4-element witness with $s = r$ is exhibited in Appendix~\ref{app:tables}.) Under the stronger mutual-inverse condition with $s \neq r$, $|S| \geq 5$ (also tight); this is the nontrivial bound, as it shows the retraction pair consumes two distinct core slots.
\end{theorem}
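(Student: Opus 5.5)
The plan has two parts: a counting argument for each lower bound, and an explicit Cayley table for each tightness claim.

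\textbf{Lower bound $|S|\ge 4$.} By Definition~\ref{def:epm}, $z_1\neq z_2$. By D, some classifier $\tau\in\core(S)$ exists, and by non-degeneracy some non-classifier $n\in\core(S)$ exists. Theorem~\ref{thm:three-cat} gives $C\cap N=\emptyset$, so $\tau\neq n$. The one point to verify is that this disjointness is not vacuous. If $\core(S)$ were empty there would be nothing to check, but $\tau\in\core(S)$, so $\core(S)\neq\emptyset$. Then for any core $x$, $\tau\dotop x$ is absorber-valued while $n\dotop x$ is not, which forces $\tau\neq n$. Hence $z_1,z_2,\tau,n$ are four distinct elements.

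\textbf{Lower bound $|S|\ge 5$ when $s\neq r$.} By Theorem~\ref{thm:placement}, under the mutual-inverse convention of Remark~\ref{rem:mutual}, both $s\in N$ and $r\in N$. Together with $s\neq r$, this makes $z_1,z_2,\tau,s,r$ five distinct elements: $\tau\in C$ is distinct from $s,r\in N$ by disjointness, and none of $\tau,s,r$ is an absorber since all lie in the core. The real work is inside Theorem~\ref{thm:placement}, which I take as given. Were I to reprove it, I would argue as follows. Suppose $r$ were a classifier. Then $r\dotop(s\dotop x)\in Z$, contradicting $r\dotop(s\dotop x)=x\in\core(S)$, as long as $s\dotop x\in\core$ or the value is otherwise meaningful. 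In either case the output of $r$ is forced to be a core element, so $r$ is not a classifier. By symmetry, $s\dotop(r\dotop x)=x$ gives $s\in N$.

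\textbf{Tightness.} I would exhibit explicit tables and check them by \texttt{decide}. For $N=4$, take $S=\{z_1,z_2,\tau,r\}$ with $s=r$, and let $r$ act as the identity on everything, so $r\dotop r=r$, $r\dotop\tau=\tau$, and $r\dotop z_i=z_i$. Let $\tau$ send $z_1\mapsto z_1$, $z_2\mapsto z_2$, and send the core elements so that $\tau$'s row differs from $z_1$'s row and from $z_2$'s row; for example $\tau\dotop\tau=z_1$ and $\tau\dotop r=z_2$. Then the following hold:
\begin{itemize}
\item Extensionality holds because all four rows are distinct.
\item Only $z_1,z_2$ are left-absorbers.
\item The dichotomy holds.
\item $r\dotop(r\dotop x)=x$ on the core.
\end{itemize}
For $N=5$, add a distinct $s$ with $s$ and $r$ acting as mutually inverse permutations on the core $\{\tau,s,r\}$, for example both as the transposition of $s$ and $r$ that fixes $\tau$. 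This keeps the rows of $s$ and $r$ identical, however, so I would instead use a 3-cycle and its inverse. I would then recheck that the rows are distinct and that the anchoring condition $r\dotop z_1=z_1$ holds.

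\textbf{Main obstacle.} I expect the hardest step to be the $N=5$ witness: it must satisfy extensionality, the mutual inverse, and anchoring simultaneously, and this is where I would fall back on SAT search.
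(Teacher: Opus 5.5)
Your proposal is correct and follows the paper's route. You get $|S|\ge 4$ from four necessarily distinct roles, and five distinct elements when $s\neq r$ from Theorem~\ref{thm:placement} ($s,r\in N$) together with $C\cap N=\emptyset$. Your 3-cycle-and-inverse construction is exactly the paper's \texttt{kripke5} witness, and your $N\!=\!4$ table is essentially \texttt{kripke4}, so the step you flagged as the main obstacle is already done.

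The only flaw is in your optional re-derivation of placement. It skips the case $s\dotop x\in\{z_1,z_2\}$ for every core $x$, where being a classifier says nothing about $r\dotop z_i$. That case is closed by injectivity of $x\mapsto s\dotop x$ on $\core(S)$ together with anchoring $r\dotop z_1=z_1$, which force $|\core(S)|\le 1$ and contradict D. Since you cite Theorem~\ref{thm:placement} rather than rely on that sketch, your main argument is unaffected.
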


\begin{theorem}[No Right Identity]
\label{thm:no-rid}
No extensional 2-pointed magma satisfying both R and D has a right identity element.
\end{theorem}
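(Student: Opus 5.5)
The plan is to argue by contradiction. Suppose $e \in S$ is a right identity, so $x \dotop e = x$ for all $x \in S$. The first step is to locate $e$ among the three categories of Theorem~\ref{thm:three-cat}.

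\emph{Step 1: $e$ is not an absorber.} Since $z_1 \dotop e = z_1$ and $z_2 \dotop e = z_2$, a right identity gives no contradiction from the absorbers acting on it. So I instead look at core elements acting on $e$. D supplies a classifier $\tau \in \core(S)$ with $\tau \dotop x \in \{z_1,z_2\}$ for all $x$. In particular $\tau \dotop e \in \{z_1,z_2\}$. But $\tau \dotop e = \tau$ because $e$ is a right identity. Hence $\tau \in \{z_1,z_2\}$, contradicting $\tau \in \core(S)$.

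This already finishes the proof, and it uses only D (the existence of a classifier). I would record this explicitly, since R is then not needed; the hypothesis R+D is stated only for uniformity with the section heading. Every $x \in S$ satisfies $x \dotop e = x$, so the right identity may be applied to any argument, including $e$ itself when $e$ is an absorber. Thus no case split on where $e$ lies is required.

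\emph{Main obstacle.} There is essentially no obstacle here, but I want to be sure the classifier condition is the \emph{total} one. Clause~1 of Definition~\ref{def:cd} requires $\tau \dotop x \in \{z_1,z_2\}$ for all $x \in S$, not merely for $x \in \core(S)$, and this is what makes the argument go through when $e$ is an absorber. If only the core-restricted condition were available, I would split cases. For $e \in \core(S)$ the argument above still works. For $e = z_i$, I would use the retraction pair: apply R to $x = s \dotop y$ and combine it with extensionality and the anchoring $r \dotop z_1 = z_1$. Given the definition as stated, the one-line argument suffices.
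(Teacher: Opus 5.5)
Your argument is correct. Clause~1 of Definition~\ref{def:cd} makes $\tau$ absorber-valued on all of $S$, so $\tau = \tau \dotop e \in \{z_1,z_2\}$, which contradicts $\tau \in \core(S)$.

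The paper gives no written proof of this theorem, only the pointer to the Lean file, and it lists the result under ``Theorems from R+D.'' Your proof shows that R, extensionality, and clauses~2--3 of D are all unnecessary. The only hypothesis used is the existence of a non-absorber whose entire row lies in $\{z_1,z_2\}$. That is a strictly stronger statement than the one in the paper, and it is worth stating as such.

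Your closing paragraph contains an error, though it lies outside the main proof. The fallback you sketch for a core-restricted classifier condition would not work. Anchoring does rule out $e=z_1$, since it would give $r = r \dotop z_1 = z_1$. But nothing in R, even under the mutual-inverse convention, rules out $e=z_2$.

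Here is a counterexample. Take $\{0,1,2,3\}$ with $z_1=0$, $z_2=1$, and rows $2 \mapsto (0,2,0,1)$ and $3 \mapsto (0,3,2,3)$.
\begin{itemize}
\item This is an extensional 2-pointed magma.
\item $s=r=3$ is the identity on core and satisfies anchoring.
\item Element~$2$ is absorber-valued on core, and element~$3$ is core-valued on core.
\item Yet $z_2$ is a right identity.
\end{itemize}
So the totality of clause~1 is exactly the hypothesis that carries the theorem, not a convenience. Under the core-restricted reading, the statement is false.
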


\begin{theorem}[No Associativity]
\label{thm:no-assoc}
No extensional 2-pointed magma carrying both a classifier and a retraction pair is associative. In particular, no extensional 2-pointed magma satisfying both R and D is a semigroup, monoid, group, or ring: the classifier dichotomy with a retraction pair is incompatible with the entire classical algebraic hierarchy.
\end{theorem}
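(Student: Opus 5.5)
The plan is to argue by contradiction: assume the operation is associative, and use it to show that the classifier $\tau$ is extensionally indistinguishable from one of the absorbers. This contradicts extensionality, since $\tau\in\core(S)$ while absorbers are not in the core. Only the one-sided retraction law $r\dotop(s\dotop x)=x$ on $\core(S)$ is needed. The ``in particular'' clause then follows at once, because semigroups, monoids, groups and rings all have an associative multiplication.

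\textbf{Step 1 (classifier absorbs products).} For any $a,b$, associativity gives $\tau\dotop(a\dotop b)=(\tau\dotop a)\dotop b$. Since $\tau\dotop a\in\{z_1,z_2\}$ is a left-absorber, the right-hand side equals $\tau\dotop a$. Hence
\[
\tau\dotop(a\dotop b)=\tau\dotop a \quad\text{for all } a,b\in S .
\]

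\textbf{Step 2 (constancy on core).} Take $a=r$ and $b=s\dotop x$ with $x\in\core(S)$. The retraction law turns Step~1 into $\tau\dotop x=\tau\dotop r$. Set $c:=\tau\dotop r\in\{z_1,z_2\}$. Then $\tau\dotop x=c$ for every core $x$, and in particular $\tau\dotop\tau=c$ because $\tau$ is itself in the core.

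\textbf{Step 3 (absorber inputs).} This is the only delicate step, since neither the retraction law nor the classifier axiom says anything directly about $\tau\dotop z_1$ or $\tau\dotop z_2$. The plan is to apply Step~1 with $a=\tau$, which yields $\tau\dotop(\tau\dotop b)=\tau\dotop\tau=c$ for all $b$. Now let $w\in\{z_1,z_2\}$ be either absorber and suppose $\tau\dotop w=d$ with $d\neq c$.
\begin{itemize}
\item Applying the identity with $b=w$ gives $\tau\dotop d=c$.
\item The absorber $d$ is either $w$ itself or the other absorber.
\item If $d=w$, then $\tau\dotop d=\tau\dotop w=d\neq c$, a contradiction.
\item If $d$ is the other absorber, then $\{d,w\}=\{z_1,z_2\}$ and $w=c$, since $c\neq d$. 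Applying the identity with $b=d$ gives $c=\tau\dotop(\tau\dotop d)=\tau\dotop c=\tau\dotop w=d$, again a contradiction.
\end{itemize}
Hence $\tau\dotop z_1=\tau\dotop z_2=c$. For $z_1$, the anchoring condition $r\dotop z_1=z_1$ gives a shortcut: $\tau\dotop z_1=\tau\dotop(r\dotop z_1)=\tau\dotop r=c$. The case analysis above is still needed for $z_2$, whose image under $r$ is unconstrained.

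\textbf{Step 4 (conclusion).} We now have $\tau\dotop x=c=c\dotop x$ for every $x\in S$. Extensionality forces $\tau=c\in\{z_1,z_2\}$, contradicting $\tau\in\core(S)$. Therefore no such magma is associative.
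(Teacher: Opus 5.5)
Your proof is correct and follows the paper's argument step for step: right-absorption of $\tau$ under associativity, constancy on core via the retraction law, constancy on absorber inputs, and then extensionality. The only difference is in Step~3. The paper first uses extensionality to show $\tau$ attains both $z_1$ and $z_2$, then computes $\tau\dotop z_i=\tau\dotop(\tau\dotop b_i)=\tau\dotop\tau$; your case analysis on $d=\tau\dotop w$ reaches the same conclusion without that preliminary surjectivity step.
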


\begin{proof}
Assume $(\tau \dotop a) \dotop b = \tau \dotop (a \dotop b)$ for all $a, b$. We derive a contradiction in four steps. The proof uses only extensionality, the existence of a classifier $\tau$, and a retraction pair $(s, r)$. Neither the global dichotomy nor the non-classifier existence axiom is needed. Lean file: \texttt{CatKripkeWallMinimal.lean}.

\emph{Step 1: $\tau$ absorbs on the right.} For any $a, b$:
\[
\tau \dotop (a \dotop b)
= (\tau \dotop a) \dotop b
= z_i \dotop b
= z_i
= \tau \dotop a
\]
where the second equality uses $\tau \dotop a \in \{z_1, z_2\}$ (classifier), and the third uses left-absorption. So $\tau \dotop (a \dotop b) = \tau \dotop a$ for all $a, b$.

\emph{Step 2: $\tau$ is constant on core.} For any core $x$: the retraction pair gives $x = r \dotop (s \dotop x)$. Applying Step~1 with $a = r$, $b = s \dotop x$:
\[
\tau \dotop x = \tau \dotop (r \dotop (s \dotop x)) = \tau \dotop r.
\]
Set $v = \tau \dotop r \in \{z_1, z_2\}$. Then $\tau \dotop x = v$ for all core $x$.

\emph{Step 3: $\tau$ is constant on absorbers.} Since $\tau$ is a classifier, it hits both $z_1$ and $z_2$; otherwise, if $\tau \dotop x = z_1$ for all $x$, then $\tau$ and $z_1$ have identical rows, so $\tau = z_1$ by extensionality, contradicting $\tau \neq z_1$ (and symmetrically for $z_2$). Let $b_1, b_2$ witness $\tau \dotop b_1 = z_1$ and $\tau \dotop b_2 = z_2$. Applying Step~1 with $a = \tau$:
\[
\tau \dotop z_1 = \tau \dotop (\tau \dotop b_1) = \tau \dotop \tau = v,
\qquad
\tau \dotop z_2 = \tau \dotop (\tau \dotop b_2) = \tau \dotop \tau = v.
\]
The middle equalities use Step~1; $\tau \dotop \tau = v$ because $\tau$ is core (Step~2).

\emph{Step 4: Contradiction.} Steps 2 and 3 give $\tau \dotop x = v$ for all $x \in S$. Since $v \in \{z_1, z_2\}$, the classifier $\tau$ has the same row as absorber $v$. By extensionality, $\tau = v \in \{z_1, z_2\}$, contradicting $\tau \neq z_1, z_2$.
\end{proof}

\begin{remark}
The retraction pair is essential in Step~2: the identity $\tau \dotop x = \tau \dotop (r \dotop (s \dotop x)) = \tau \dotop r$ uses $r \dotop (s \dotop x) = x$ to show $\tau$ is constant on core. Without a retraction pair, the proof breaks, and the theorem does not extend to ``no extensional 2-pointed magma with a classifier is associative.''
\end{remark}

\subsection{The K-infinity obstruction}

The following classical result explains why the PCA/TCA framework cannot study finite algebras, and why our setting requires a different combinator basis.

\begin{theorem}[K-infinity]
\label{thm:k-infinity}
If $(A, \dotop, \mathbf{k}, \mathbf{s})$ is a combinatory algebra (total application satisfying $\mathbf{k} \dotop a \dotop b = a$ and the $\mathbf{s}$ equations) with $|A| \geq 2$, then $A$ is infinite.
\end{theorem}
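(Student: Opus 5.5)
We argue by contradiction. Suppose $A$ is a finite combinatory algebra with $|A| \geq 2$. The goal is to exhibit infinitely many pairwise distinct elements of $A$. The classical route uses the terms $\mathbf{k}^n \mathbf{i}$, or equivalently iterated $\mathbf{k}$-applications. First we define $\mathbf{i} := \mathbf{s}\mathbf{k}\mathbf{k}$ and verify $\mathbf{i}\,a = a$ from the $\mathbf{s}$ and $\mathbf{k}$ equations. We then record a \emph{collapse lemma}: if $\mathbf{k} = \mathbf{s}$, or more generally if any two of a suitable family of combinators coincide, then $a = b$ for all $a,b$, contradicting $|A| \geq 2$.

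For the second step, define the sequence $t_0 := \mathbf{i}$ and $t_{n+1} := \mathbf{k}\,t_n$. Each $t_n$ is a function that ignores its first $n$ arguments and then returns its last one:
\[
t_n\, a_1 \cdots a_n\, b = b, \qquad t_n\, a_1 \cdots a_{n-1} = \mathbf{k}\,\mathbf{i} \quad (n\ge 1)\text{-style partial behaviour}.
\]
By finiteness, pigeonhole gives $t_m = t_n$ for some $m < n$.

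The third step derives the contradiction from $t_m = t_n$. Apply both sides to $m$ arbitrary arguments $a_1,\dots,a_m$. The left side reduces to $\mathbf{i}$, while the right side reduces to $t_{n-m} = \mathbf{k}\,t_{n-m-1}$. Hence $\mathbf{i} = \mathbf{k}\,u$ with $u := t_{n-m-1}$. Applying this to an arbitrary $a$ gives $a = \mathbf{i}\,a = \mathbf{k}\,u\,a = u$ for every $a \in A$. So every element equals $u$, hence $|A| = 1$, contradicting $|A|\geq 2$.

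The main obstacle I expect is this reduction step, specifically making sure that application in a combinatory algebra is total and that equality of elements allows substitution under application. The argument needs nothing about extensionality; only the congruence $x = y \Rightarrow x\,c = y\,c$ is used, and that is automatic for a function $\cdot$. I also need the $\mathbf{s}$ equation only to produce $\mathbf{i}$. If one prefers to avoid $\mathbf{s}$ entirely, one can replace $\mathbf{i}$ by $t_0 := \mathbf{k}$ and compare $\mathbf{k}^{(m)}$ with $\mathbf{k}^{(n)}$ after feeding enough arguments. In that variant the identification yields $\mathbf{k}\,x\,y = x$ against $\mathbf{k}\,x\,y = \mathbf{k}\,\cdots$ and forces all elements equal similarly. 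I would present the $\mathbf{i}$-version as the cleanest.
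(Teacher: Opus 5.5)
Your proof is correct. Underneath the pigeonhole packaging it uses the same mechanism as the paper's proof: the map $\varphi(x) = \mathbf{k} \dotop x$ is injective, and some element lies outside its image, so the $\varphi$-orbit of that element never repeats.

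\begin{itemize}
\item Your step ``apply both sides to $m$ arguments'' is exactly $m$-fold use of the injectivity of $\varphi$. It turns $t_m = t_n$ into $t_0 = t_{n-m}$.
\item Your collapse $a = \mathbf{i}\,a = \mathbf{k}\,u\,a = u$ is precisely the statement $\mathbf{i} \notin \mathrm{Im}(\varphi)$.
\end{itemize}

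The paper differs in two ways. It takes $\mathbf{k}$ itself as the missed point: $\mathbf{k}\,x = \mathbf{k}$ forces $x = \mathbf{k}\,y$ for all $y$, and injectivity then makes all $y$ equal. It also concludes directly from the fact that a finite set has no non-surjective self-injection, rather than building the orbit explicitly.

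The trade-off is this. Choosing $\mathbf{i}$ makes your collapse a one-liner and gives an explicit infinite family $\mathbf{k}^n\mathbf{i}$ of distinct elements. But it needs the $\mathbf{s}$ equation to construct $\mathbf{i} = \mathbf{s}\mathbf{k}\mathbf{k}$. The paper's choice uses only the $\mathbf{k}$ equation, and its subsequent remark that $\mathbf{s}$ is irrelevant depends on that.

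Your fallback variant with $t_0 := \mathbf{k}$ is a valid proof. Peeling off $m$ arguments gives $\mathbf{k} = \mathbf{k}\,u$, which is the paper's missed point, and then $x = \mathbf{k}\,x\,y = u\,y$ for all $x,y$. If you want the stronger, $\mathbf{s}$-free statement, that is the version to present.

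Finally, the ``collapse lemma'' in your first step is never used and is too vague (``a suitable family of combinators'') to stand as stated. Drop it, since your third step already does all the work.
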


\begin{proof}
The map $\varphi(x) = \mathbf{k} \dotop x$ is injective: $\mathbf{k} \dotop a = \mathbf{k} \dotop b$ implies $\mathbf{k} \dotop a \dotop c = \mathbf{k} \dotop b \dotop c$ for all $c$, hence $a = b$. We claim $\mathbf{k} \notin \mathrm{Im}(\varphi)$. Suppose $\mathbf{k} \dotop x = \mathbf{k}$ for some $x$. Then for all $y$, $\mathbf{k} \dotop x \dotop y = \mathbf{k} \dotop y$, so $x = \mathbf{k} \dotop y$ for all $y$. By injectivity, all $y$ are equal, contradicting $|A| \geq 2$. Thus $\varphi : A \to A$ is an injection missing $\mathbf{k}$, which is impossible for finite~$A$.
\end{proof}

\begin{remark}
The proof uses only the $\mathbf{k}$ equation ($\mathbf{k} \dotop a \dotop b = a$) and totality; $\mathbf{s}$ is irrelevant. Our extensional 2-pointed magmas avoid this obstruction: instead of a projector $\mathbf{k}$ with $\mathbf{k} \dotop a \dotop b = a$, they have left-absorbers $z$ with $z \dotop x = z$ (the dual pattern, constant on columns rather than rows) and a retraction pair $(s, r)$ with $r \dotop (s \dotop x) = x$ on core. Neither structure provides a non-surjective injection on the full ground set.
\end{remark}

% ══════════════════════════════════════════════════════════════
\section{Decomposition Invariance}
\label{sec:invariance}
% ══════════════════════════════════════════════════════════════

\begin{definition}[Absorber-Preserving Isomorphism]
An \emph{absorber-preserving isomorphism} $\varphi : M_1 \to M_2$ between extensional 2-pointed magmas is a bijection $\varphi : \Fin(n) \to \Fin(n)$ satisfying:
\begin{itemize}
  \item $\varphi(a \dotop_1 b) = \varphi(a) \dotop_2 \varphi(b)$ for all $a, b$
  \item $\varphi(z_1^{(1)}) = z_1^{(2)}$ and $\varphi(z_2^{(1)}) = z_2^{(2)}$
\end{itemize}
\end{definition}

\begin{theorem}[Functoriality]
\label{thm:functoriality}
Any absorber-preserving isomorphism between extensional 2-pointed magmas satisfying D preserves the three-category decomposition: $\varphi$ maps zeros to zeros, classifiers to classifiers, and non-classifiers to non-classifiers.
\end{theorem}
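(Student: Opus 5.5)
The proof transports each defining predicate across $\varphi$ using only the homomorphism equation $\varphi(a \dotop_1 b) = \varphi(a) \dotop_2 \varphi(b)$ and bijectivity. The first fact needed is that $\varphi$ maps $Z^{(1)} = \{z_1^{(1)}, z_2^{(1)}\}$ onto $Z^{(2)}$. This is given directly by the absorber-preserving hypothesis. Since $\varphi$ is a bijection, it follows that $\varphi$ restricts to a bijection $\core(M_1) \to \core(M_2)$: an element outside $Z^{(1)}$ cannot map into $Z^{(2)}$, because the two absorbers of $M_2$ are already hit by $z_1^{(1)}, z_2^{(1)}$. Together with its converse, the key equivalence is
\[
u \in Z^{(1)} \iff \varphi(u) \in Z^{(2)} .
\]

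Next, fix $y \in \core(M_1)$. For every $x \in \core(M_1)$ we have $\varphi(y) \dotop_2 \varphi(x) = \varphi(y \dotop_1 x)$. By the key equivalence, $y \dotop_1 x \in Z^{(1)}$ holds exactly when $\varphi(y) \dotop_2 \varphi(x) \in Z^{(2)}$. As $x$ ranges over $\core(M_1)$, $\varphi(x)$ ranges over all of $\core(M_2)$, because the restriction of $\varphi$ to cores is a bijection. Hence ``$y \dotop_1 x$ is absorber-valued for all core $x$'' is equivalent to ``$\varphi(y) \dotop_2 x'$ is absorber-valued for all core $x'$.'' The same argument, with the condition negated, works for ``avoids absorbers for all core $x$.''

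Therefore $y \in C^{(1)}$ if and only if $\varphi(y) \in C^{(2)}$, and $y \in N^{(1)}$ if and only if $\varphi(y) \in N^{(2)}$. Combined with zeros mapping to zeros, this gives the statement. By Theorem~\ref{thm:three-cat} the three classes partition each magma, so $\varphi$ in fact induces bijections $C^{(1)} \to C^{(2)}$ and $N^{(1)} \to N^{(2)}$.

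The only step requiring care is the surjectivity of $\varphi$ onto $\core(M_2)$. It is what lets a universal statement over $\core(M_1)$ transfer to a universal statement over $\core(M_2)$, rather than just over the image of $\core(M_1)$. In Lean, I would handle it by rewriting the quantifier over $x'$ as a quantifier over $\varphi^{-1}(x')$, which requires showing that $\varphi^{-1}$ also preserves core. That follows from the same pigeonhole on $Z$, and it needs no finiteness beyond bijectivity.

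The D hypothesis on the two magmas is not needed for preservation itself, only to make the decomposition meaningful. If one prefers to read ``classifier'' in Definition~\ref{def:cd}(1) as quantifying over all $x$, the argument is identical, since $\varphi$ is also surjective on all of $S$.
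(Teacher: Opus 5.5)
Your proposal is correct and follows essentially the paper's route: you transport each predicate through $\varphi(a \dotop_1 b) = \varphi(a) \dotop_2 \varphi(b)$. Injectivity excludes absorber preimages, and surjectivity onto $\core(M_2)$ lets the universal quantifier transfer; you package both cleanly as the equivalence $u \in Z^{(1)} \iff \varphi(u) \in Z^{(2)}$. The only difference is the zeros step, which the paper obtains by showing $\varphi(a)$ is a left-absorber via the homomorphism and surjectivity, while you read it off the pointing condition. Since absorber-preservation is part of the definition, both are immediate.
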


\begin{proof}[Proof sketch]
\emph{Zeros}: if $a$ is a left-absorber, then for all~$x$ in the image of~$\varphi$,
$\varphi(a) \dotop_2 \varphi(x) = \varphi(a \dotop_1 x) = \varphi(a)$.
By surjectivity, $\varphi(a)$ is a left-absorber.

\emph{Classifiers}: if $y \dotop_1 x \in \{z_1, z_2\}$ for all core~$x$, then
$\varphi(y) \dotop_2 \varphi(x) = \varphi(y \dotop_1 x) \in \{z_1', z_2'\}$
for all core~$\varphi(x)$.
By surjectivity and injectivity (to exclude absorber preimages), $\varphi(y)$ is a classifier.

\emph{Non-classifiers}: symmetric argument using injectivity to transport~$\notin \{z_1, z_2\}$.

Algebraic proof (no \texttt{decide}); Lean file: \texttt{Functoriality.lean}.
\end{proof}

\begin{corollary}[Canonicity of the decomposition]
\label{cor:canonicity}
The $Z/C/N$ partition is the unique coarsest non-trivial isomorphism-invariant partition of the core. Any absorber-preserving isomorphism preserves each class individually (Theorem~\ref{thm:functoriality}), so no coarser partition of core into two non-empty parts can be invariant unless it coincides with $C/N$. The classification is determined by a single algebraic invariant: whether an element's row on core has image contained in~$\{z_1, z_2\}$.
\end{corollary}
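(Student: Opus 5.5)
The statement to prove is Corollary~\ref{cor:canonicity}. The plan is to derive it directly from Theorem~\ref{thm:functoriality}. The first step is to fix what ``isomorphism-invariant partition of the core'' means: a partition $\mathcal{P}$ of $\core(S)$ such that every absorber-preserving automorphism $\varphi$ of $S$ sends each block of $\mathcal{P}$ onto a block of $\mathcal{P}$. More generally, a labeling defined uniformly for all D-magmas is invariant when $\varphi$ carries blocks of $M_1$ to blocks of $M_2$. By Theorem~\ref{thm:functoriality}, $\varphi$ maps classifiers to classifiers and non-classifiers to non-classifiers. Since $\varphi$ is a bijection that also fixes $Z$ setwise, it restricts to bijections $C_1\to C_2$ and $N_1\to N_2$. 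Hence $\{C,N\}$ is an invariant partition of the core. Both parts are nonempty by clauses (1) and (3) of Definition~\ref{def:cd}, so the partition is non-trivial.

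The second step is to justify the last sentence of the corollary, that the classification is governed by a single invariant. Theorem~\ref{thm:three-cat} shows that $C$ is exactly the set of core $y$ with $y\dotop(\core S)\subseteq\{z_1,z_2\}$, and $N$ is its complement in the core. Thus membership in $C$ is a first-order property defined from $\dotop$ and the two constants, and any absorber-preserving isomorphism preserves such properties.

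The third step, the ``unique coarsest'' clause, is the main obstacle, because taken literally it is false. The one-block partition $\{\core(S)\}$ is coarser and trivially invariant, and some structures have no automorphisms other than the identity, in which case every partition is invariant. So I would prove the claim in the form the corollary's own explanatory sentence suggests. Among partitions of the core into exactly two nonempty blocks that are definable uniformly and invariant under all absorber-preserving isomorphisms between D-magmas, the partition $C/N$ is canonical, and every invariant partition refining nothing finer than the dichotomy must be $C/N$. Concretely, I would argue that any invariant two-block partition $\{P,Q\}$ distinguished only by whether an element's core row lies in $\{z_1,z_2\}$ coincides with $\{C,N\}$. The reason is that the dichotomy clause (2) leaves exactly two possible values for this invariant on core elements.

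If a stronger reading is required, I would show that $\{C,N\}$ is the coarsest invariant partition that separates elements whose core rows differ in this invariant. Let $\mathcal{P}$ be any invariant partition whose blocks are unions of sets on which the invariant is constant. Then $\mathcal{P}$ refines $\{C,N\}$, because the invariant has exactly two values on the core. If $\mathcal{P}$ has only two nonempty blocks, it must equal $\{C,N\}$. I expect to need to state this reinterpretation explicitly rather than prove the literal claim.
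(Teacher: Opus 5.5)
Your derivation matches the paper's. The corollary's only justification is Theorem~\ref{thm:functoriality}: each class is carried onto the corresponding class, so $C/N$ is an invariant partition of the core, and both blocks are non-empty by clauses (1) and (3) of Definition~\ref{def:cd}. The paper then adds that a two-block partition coarser than $C/N$ must be $C/N$ itself. That is exactly the weakened statement you end up proving.

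You are right that this argument does not give the headline ``unique coarsest'' claim, and that claim is false as stated. Your one-block objection is moot, since ``non-trivial'' excludes it. Your rigidity objection is real, and the paper supplies a concrete case: the $N\!=\!5$ coexistence witness of Appendix~\ref{app:tables}.
\begin{itemize}
\item An absorber-preserving automorphism of that witness must fix the unique non-classifier $2$.
\item It could only swap the classifiers $3$ and $4$, and that fails because $3\dotop 1=0$ while $4\dotop 1=1$.
\item So the automorphism group is trivial and every partition of the core is invariant. For example, $\{\{2,3\},\{4\}\}$ is invariant, and it is incomparable with $C/N=\{\{3,4\},\{2\}\}$.
\end{itemize}

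Uniform definability does not rescue uniqueness either. The set $\{y\in\core(S) : y\dotop z_2=z_2\}$ is preserved by every absorber-preserving isomorphism, and in that same witness it equals $\{4\}$. So drop the ``canonical among definable two-block partitions'' reading and keep only your final formulation.

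In that final formulation, state the hypothesis as ``every block lies inside $C$ or inside $N$.'' As written (``blocks are unions of sets on which the invariant is constant'') it is vacuous, because singletons qualify. Also, the refinement of $\{C,N\}$ follows from this hypothesis, not from the invariant taking exactly two values.
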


\begin{theorem}[Capability Invariance]
\label{thm:cap-invariance}
All three capabilities R, D, and H are preserved by absorber-preserving isomorphisms of extensional 2-pointed magmas. If $\varphi : M_1 \to M_2$ is an absorber-preserving isomorphism and $M_1$ satisfies R (resp.\ D, H), then $M_2$ satisfies R (resp.\ D, H). The three capabilities are intrinsic algebraic properties, not presentation artifacts.
\end{theorem}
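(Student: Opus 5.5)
The plan is to transport each witness along $\varphi$, using three facts about an absorber-preserving isomorphism. First, $\varphi$ is a homomorphism. Second, $\varphi$ is a bijection with $\varphi(\{z_1^{(1)},z_2^{(1)}\})=\{z_1^{(2)},z_2^{(2)}\}$. Consequently $\varphi$ restricts to a bijection $\core(M_1)\to\core(M_2)$, and for every $u\in M_1$ we have $u\in\{z_1^{(1)},z_2^{(1)}\}$ iff $\varphi(u)\in\{z_1^{(2)},z_2^{(2)}\}$. I would prove this core-bijection lemma first: injectivity gives one direction, and surjectivity plus the two absorber equations give the other. Every later step reduces to it, together with the rewrite $\varphi(a)\dotop_2\varphi(x)=\varphi(a\dotop_1 x)$ and the fact that every core element of $M_2$ has the form $\varphi(x)$ with $x\in\core(M_1)$.

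\emph{R.} Given a retraction pair $(s,r)$ in $M_1$, I take $(\varphi(s),\varphi(r))$. These lie in $\core(M_2)$ by the lemma. For a core $y=\varphi(x)$,
\[
\varphi(r)\dotop_2(\varphi(s)\dotop_2\varphi(x))=\varphi(r\dotop_1(s\dotop_1 x))=\varphi(x),
\]
and the mutual inverse of Remark~\ref{rem:mutual} transports the same way. The anchoring equation $\varphi(r)\dotop_2 z_1^{(2)}=\varphi(r\dotop_1 z_1^{(1)})=\varphi(z_1^{(1)})=z_1^{(2)}$ uses $\varphi(z_1^{(1)})=z_1^{(2)}$ exactly. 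This is the one place where preservation of the labelled absorbers, and not just of the set $\{z_1,z_2\}$, matters.

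\emph{D.} This follows essentially from Theorem~\ref{thm:functoriality}. I would spell out three points:
\begin{itemize}
  \item The image of a classifier is a classifier. It is core by the lemma, and all its outputs $\varphi(\tau\dotop_1 x)$ are absorbers, over all $x\in S$ because $\varphi$ is surjective.
  \item The global dichotomy transfers. Any $y'\in\core(M_2)$ equals $\varphi(y)$ with $y$ core, and the lemma moves the condition ``all core outputs are in $Z$'' or ``all avoid $Z$'' across.
  \item A non-classifier maps to a non-classifier.
\end{itemize}

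\emph{H.} The witnesses $a,b,c$ map to $\varphi(a),\varphi(b),\varphi(c)$. These are pairwise distinct by injectivity and core by the lemma. Core-preservation and factorization transfer by the homomorphism identity over the core bijection. For non-triviality, the set $\{\varphi(a)\dotop_2 y: y\in\core(M_2)\}$ equals $\varphi(\{a\dotop_1 x : x\in\core(M_1)\})$, which has the same cardinality because $\varphi$ is injective.

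The main obstacle is bookkeeping rather than mathematics. Quantifiers over $\core(M_2)$ must be rewritten as quantifiers over $\varphi(\core(M_1))$. In Lean this amounts to using $\varphi^{-1}$, or surjectivity, to obtain preimages and then checking that those preimages are core. I would therefore isolate the core-bijection lemma and the image-cardinality fact (via \texttt{Finset.card\_image\_of\_injective}) as the two reusable ingredients.
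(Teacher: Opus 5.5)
Your proposal is correct and matches the paper's proof. Both transport each witness along $\varphi$ through the homomorphism identity. Zero-preservation handles anchoring and absorber-valuedness, injectivity handles distinctness, non-triviality and non-degeneracy, and surjectivity rewrites the quantifiers over $M_2$. Your explicit core-bijection lemma makes precise the bookkeeping that the paper's sketch leaves implicit.
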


\begin{proof}
R: if $(s, r)$ is a retraction pair in $M_1$, then $(\varphi(s), \varphi(r))$ is a retraction pair in $M_2$. The mutual-inverse equations transport through the homomorphism, and anchoring ($r \dotop z_1 = z_1$) transfers by zero-preservation.

D: the classifier $\varphi(\tau)$ is absorber-valued in $M_2$ (by the homomorphism and zero-preservation). The global dichotomy transfers by surjectivity: every element in $M_2$ has a preimage whose dichotomy status transports. Non-degeneracy transfers by injectivity.

H: the ICP triple $(\varphi(a), \varphi(b), \varphi(c))$ satisfies the factorization equation in $M_2$. Injectivity preserves pairwise-distinctness and non-triviality; surjectivity preserves core-preservation.

Algebraic proofs (no \texttt{decide}); Lean file: \texttt{CapabilityInvariance.lean}.
\end{proof}

% ══════════════════════════════════════════════════════════════
\section{ICP Characterization}
\label{sec:icp}
% ══════════════════════════════════════════════════════════════

An algebra with R can encode, decode, and look up any entry in the Cayley table via the retraction pair $(s, r)$. But this requires an \emph{external} machine to supply control flow: dispatch, composition, recursion. H internalizes a non-trivial composition pattern that would otherwise be supplied externally: dedicated elements whose left-multiplication implements composition ($\eta = \rho \circ g$) and storage ($g$). The $\ICP$ captures exactly this: one core action factors through two others, meaning the algebra contains a fragment of its own control flow as an element.

The standard formulation uses two operational axioms:
\begin{itemize}
  \item \emph{Compose}: $\eta \dotop x = \rho \dotop (g \dotop x)$ for all $x \in \core(S)$
  \item \emph{Inert}: $g \dotop x \notin \{z_1, z_2\}$ for all $x \in \core(S)$
\end{itemize}
with $\eta, g, \rho$ pairwise distinct non-absorbers and $\eta$ non-trivial ($\geq 2$ distinct values on core).

\begin{theorem}[$\ICP$ Equivalence]
\label{thm:icp-equiv}
For any finite magma on a 2-pointed set $(S, \dotop, z_1, z_2)$ of any size $n$:
\[
\ICP \iff \text{Compose} + \text{Inert} \text{ (with non-trivial composite)}.
\]
\end{theorem}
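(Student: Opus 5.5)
The equivalence is a variable renaming. I will prove both directions by exhibiting the dictionary $a \leftrightarrow \eta$, $b \leftrightarrow g$, $c \leftrightarrow \rho$. No use is made of extensionality, the absorber axioms, or finiteness beyond the fact that $\core(S) = S \setminus \{z_1,z_2\}$ is a well-defined subset. This is why the statement holds for any 2-pointed magma of any size $n$.

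For ($\Rightarrow$), assume $\ICP$ with witnesses $a,b,c \in \core(S)$, pairwise distinct. Set $\eta := a$, $g := b$, $\rho := c$. Then:
\begin{itemize}
  \item Compose, $\eta \dotop x = \rho \dotop (g \dotop x)$ for all core $x$, is literally the Factorization clause.
  \item Inert, $g \dotop x \notin \{z_1,z_2\}$ for core $x$, is literally Core-preservation.
  \item The non-triviality of $\eta$ ($|\{\eta \dotop x : x \in \core(S)\}| \geq 2$) is the Non-triviality clause.
  \item Pairwise distinctness and membership in $\core(S)$ transfer unchanged.
\end{itemize}
For ($\Leftarrow$), set $a := \eta$, $b := g$, $c := \rho$ and read the same four correspondences backwards.

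The only point requiring care is that the quantifier domains match exactly. Both Compose and Inert quantify over $x \in \core(S)$, not all of $S$, which matches clauses (1) and (2) of Definition~\ref{def:icp}. Both non-triviality conditions also count images over core only. I would state these side conditions explicitly so the reader sees that neither formulation is stronger on absorber columns. In Lean, the theorem is an \texttt{Iff.intro} whose two directions each destructure an existential and repackage its components in a different order, with no computation. I expect no genuine obstacle; the main risk is merely a mismatch in how ``non-trivial composite'' is phrased, which I would fix by defining it verbatim as $|\{\eta \dotop x : x \in \core(S)\}| \geq 2$.
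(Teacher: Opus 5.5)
Your proposal is correct and matches the paper's proof. Both identify the witness triples via $(\eta, g, \rho) = (a, b, c)$, so that Factorization becomes Compose, Core-preservation becomes Inert, and pairwise distinctness, core membership, and non-triviality transfer verbatim in both directions (the paper's extra remarks on why the nondegeneracy conditions are needed are not part of the equivalence argument).
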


\begin{proof}
The bijection between witness triples is $(\eta, g, \rho) = (a, b, c)$: the composed element $a$ is $\eta$, the core-preserving element $b$ is $g$, and the outer element $c$ is $\rho$. Under this identification the factorization $a \dotop x = c \dotop (b \dotop x)$ becomes $\eta \dotop x = \rho \dotop (g \dotop x)$ (Compose), core-preservation of $b$ becomes Inert, and the remaining conditions (pairwise distinct, non-absorber) transfer directly.

The pairwise-distinctness and non-triviality conditions are not cosmetic; they block two degeneracies, and both are proved necessary by Lean-verified witnesses (\texttt{ICP.lean}). Without pairwise-distinctness, the retraction pair $(s, r)$ with $s = r$ trivially satisfies factorization via $s \dotop x = s \dotop (s \dotop x)$, giving every $\FRM$ a spurious $\ICP$; the minimal R{+}D model at $N\!=\!4$ (Cayley table in Appendix~\ref{app:tables}; Lean name \texttt{kripke4}) separates the weakened definition from the full one (theorem \texttt{icp\_distinct\_necessary}). Without non-triviality, any constant-on-core element $a$ (with $a \dotop x = v$ for all core $x$) trivially factors through any core-preserving $b$ by taking $c$ with $c \dotop y = v$; a custom 6-element $\FRM$ separates the weakened definition from the full one (theorem \texttt{icp\_nontrivial\_necessary}, \texttt{ICP.lean}).

The proof is pure equational logic, with no \texttt{decide}, no case analysis on $n$. It holds for every finite magma on a 2-pointed set, of any size. Lean file: \texttt{ICP.lean}, theorem \texttt{icp\_iff\_composeInert}.
\end{proof}

This equivalence gives H a single-concept definition: the left-regular representation contains a non-trivially composed element. The operational axioms (Compose, Inert) and the $\ICP$ are the same property under variable renaming. Additionally, $\ICP$ is invariant under $\FRM$ isomorphisms (algebraic proof, no \texttt{decide}; \texttt{ICP.lean}, theorem \texttt{FRMIso.preserves\_icp}).

% ══════════════════════════════════════════════════════════════
\section{Independence}
\label{sec:independence}
% ══════════════════════════════════════════════════════════════

\begin{figure}[t]
\centering
\begin{tikzpicture}[
    capability/.style={draw, rectangle, rounded corners=2pt, minimum size=9mm, font=\large\bfseries},
    arr/.style={-{Stealth[length=5pt]}, semithick},
    cross/.style={fill=white, draw=red!70!black, thick, inner sep=1pt, font=\scriptsize\bfseries},
    lbl/.style={font=\scriptsize, fill=white, inner sep=1pt},
  ]
  % Equilateral triangle, R at bottom center
  \node[capability] (R) at (0, 0) {R};
  \node[capability] (D) at (-2.6, 3.6) {D};
  \node[capability] (H) at ( 2.6, 3.6) {H};

  % D <-> H (top edge): upper arc and lower arc
  \draw[arr] (D.20) to[bend left=18] coordinate[midway] (DH) (H.160);
  \node[cross] at (DH) {$\times$};
  \node[lbl, above=1pt of DH] {$N\!=\!4$};

  \draw[arr] (H.200) to[bend left=18] coordinate[midway] (HD) (D.-20);
  \node[cross] at (HD) {$\times$};
  \node[lbl, below=1pt of HD] {$N\!=\!5$};

  % D <-> R (left edge): both bend left — outer arc goes left, inner goes right
  \draw[arr] (R.140) to[bend left=18] coordinate[midway] (RD) (D.280);
  \node[cross] at (RD) {$\times$};
  \node[lbl, left=1pt of RD] {$N\!=\!8$};

  \draw[arr] (D.310) to[bend left=18] coordinate[midway] (DR) (R.110);
  \node[cross] at (DR) {$\times$};
  \node[lbl, right=1pt of DR] {$N\!=\!4$};

  % H <-> R (right edge): mirror of left — both bend right
  \draw[arr] (R.40) to[bend right=18] coordinate[midway] (RH) (H.260);
  \node[cross] at (RH) {$\times$};
  \node[lbl, right=1pt of RH] {$N\!=\!6$};

  \draw[arr] (H.230) to[bend right=18] coordinate[midway] (HR) (R.70);
  \node[cross] at (HR) {$\times$};
  \node[lbl, left=1pt of HR] {$N\!=\!5$};
\end{tikzpicture}
\caption{Full independence structure. All six pairwise non-implications are proved by Lean-verified finite counterexamples at the indicated sizes. No capability implies any other.}
\label{fig:independence}
\end{figure}
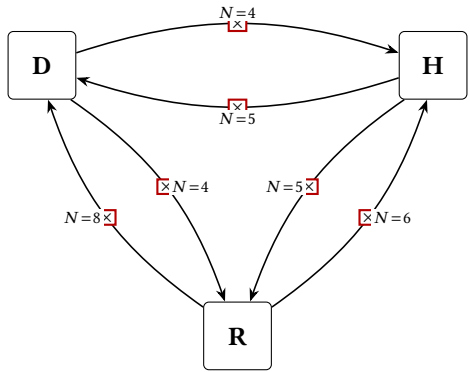

\begin{theorem}[Full Independence]
\label{thm:independence}
No capability implies any other. Specifically, all six pairwise non-implications hold:
\begin{enumerate}
  \item $R \not\Rightarrow D$: An extensional 2-pointed magma of size $8$ satisfying R contains a classifier element but fails the global dichotomy condition.
  \item $R \not\Rightarrow H$: An extensional 2-pointed magma of size $6$ satisfying R has 4 core elements (enough for $\ICP$ to be non-vacuous) but no triple satisfies the $\ICP$.
  \item $D \not\Rightarrow H$: An extensional 2-pointed magma of size $4$ satisfying D has only 2 core elements, so $\ICP$ is vacuously unsatisfiable (pigeonhole).
  \item $H \not\Rightarrow D$: An extensional 2-pointed magma of size $5$ satisfies $\ICP$ but violates the classifier dichotomy. This is tight: $\ICP$ requires $N \geq 5$.
  \item $D \not\Rightarrow R$: An extensional 2-pointed magma of size $4$ satisfies the classifier dichotomy but admits no retraction pair. This is tight: D requires $N \geq 4$.
  \item $H \not\Rightarrow R$: An extensional 2-pointed magma of size $5$ satisfies $\ICP$ but admits no retraction pair. This is tight: $\ICP$ requires $N \geq 5$.
\end{enumerate}
\end{theorem}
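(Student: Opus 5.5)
The plan is to prove each of the six non-implications by exhibiting an explicit finite Cayley table and checking the relevant properties by finite case analysis. Positive properties are verified by displaying witnesses: a retraction pair $(s,r)$, a classifier $\tau$ together with the $C/N$ split, or an ICP triple $(a,b,c)$. Negative properties are verified by exhaustive enumeration over the finitely many candidates. Every table must first be checked to be an extensional 2-pointed magma (Definition~\ref{def:epm}): it has exactly two constant rows equal to their own index, and all rows are pairwise distinct. Tables are found by SAT search, encoding the base axioms, the wanted property, and the negation of the unwanted one, and then frozen and rechecked in Lean by \texttt{decide}.

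The tightness and vacuity claims come from counting rather than search. ICP needs three pairwise distinct core elements, so $|\core(S)|\ge 3$ and $|S|\ge 5$. This makes the $N=4$ witness for $D\not\Rightarrow H$ automatic: any D-model of size 4 (for instance, the $Z,C,N$ roles each filled minimally) fails H by pigeonhole. The same count shows that the size-5 witnesses for $H\not\Rightarrow D$ and $H\not\Rightarrow R$ are optimal. For $D\not\Rightarrow R$ at $N=4$, D needs the two absorbers, a classifier $\tau$ and a non-classifier, so $|S|\ge 4$, as in Theorem~\ref{thm:card}. With core $\{\tau,n\}$, I would choose the row of $n$ on core so that no $(s,r)$ among $\{\tau,n\}$ satisfies $r\dotop(s\dotop x)=x$. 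One way is to make $n$ constant on core, e.g.\ $n\dotop\tau=n\dotop n=n$; then $n\dotop(s\dotop x)$ cannot recover both core points, and $r=\tau$ is excluded because its outputs are absorbers. Extensionality is kept by choosing $n$'s absorber column entries to differ from $\tau$'s row.

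For the remaining directions I would construct the tables directly. For $H\not\Rightarrow R$ at $N=5$, take core $\{a,b,c\}$ with $b$ core-preserving, $a=c\circ b$ on core, and $a$ non-constant on core. To block R, arrange that no core element acts injectively on core. The ICP conditions do not force this, because $b$ may act non-injectively while $a$ still takes two values. Then $x\mapsto r\dotop(s\dotop x)$ cannot be the identity on a 3-element core. For $H\not\Rightarrow D$, modify such a table so that some core element is partially classifying, sending one core input to an absorber and another to core. For $R\not\Rightarrow H$ at $N=6$ and $R\not\Rightarrow D$ at $N=8$, I would rely on the SAT search. The negative checks are finite: all ordered pairwise-distinct triples in a 4-element core for ICP, and a witness element with mixed row for the failed dichotomy.

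The main obstacle is $R\not\Rightarrow H$. A retraction pair tends to produce compositions: if $s$ is core-preserving, any $a$ with $a\dotop x=c\dotop(s\dotop x)$ gives an ICP triple. So the table must ensure that, for every core-preserving $b$ and every $c$, the composite row $c\circ b$ on core either coincides with no core row distinct from $b,c$ or is constant. I would first try imposing this as the SAT constraint directly, possibly breaking symmetry by fixing $s,r$. If the search fails at $N=6$, I would fall back to larger $N$.
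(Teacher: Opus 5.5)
Your overall plan is the paper's: explicit Cayley tables, with positive properties certified by witnesses and negative ones by exhaustive enumeration. But the one argument you carry out by hand, $D\not\Rightarrow R$, has a hole.

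Making $n$ constant on core kills $(s,r)=(n,n)$, and $r=\tau$ is excluded. For $(s,r)=(\tau,n)$, however, the values $n\dotop(\tau\dotop x)$ are read off $n$'s \emph{absorber} columns, and ``constant on core'' says nothing about those. For instance, take $\tau\dotop\tau=z_1$, $\tau\dotop n=z_2$, $n\dotop z_1=\tau$ and $n\dotop z_2=n$. Then $r\dotop(s\dotop x)=x$ does hold on $\core(S)$.

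What excludes this pair is the anchoring clause $r\dotop z_1=z_1$ of Definition~\ref{def:frm}. It forces $\tau\dotop x=z_2$ for both core $x$, so $n\dotop z_2$ would have to equal both $\tau$ and $n$. More directly, the mutual-inverse convention of Remark~\ref{rem:mutual} excludes it, since $s\dotop(r\dotop x)\in\{z_1,z_2\}$ when $s=\tau$. This is Theorem~\ref{thm:placement}: $s,r\in N$, so $s=r=n$.

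Relatedly, your condition on $n$'s absorber columns guards the wrong axiom. Extensionality against $\tau$ is automatic, since $n\dotop n\in\core(S)$. The real danger is $n\dotop z_1=n\dotop z_2=n$, which would make $n$ a third left-absorber. With these repairs the construction works. The paper's table sidesteps the issue by taking $n\dotop z_1\neq z_1$, so anchoring rules out $r=n$ at once.

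The remaining directions are sound, and two of them depart usefully from the paper.

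\begin{itemize}
\item Your $H\not\Rightarrow R$ invariant (no core row injective on core, hence no $s$ has a left inverse) needs no enumeration of pairs. It is realizable at $N=5$: on core $\{2,3,4\}$, give elements $2,3,4$ the core rows $(2,2,4)$, $(2,2,3)$, $(2,4,4)$ and take $(a,b,c)=(2,3,4)$. The paper's table instead contains a permutation row (element~3) and is blocked by anchoring plus the absence of an inverse row.
\item Your mixed-element route to $H\not\Rightarrow D$ also differs from the paper's table, which has no classifier at all. Both break D.
\end{itemize}

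Finally, $R\not\Rightarrow H$ is not a real obstacle. A fallback to larger $N$ would not prove the stated size-$6$ claim anyway, and you do not need one. Let $s=r$ act as the identity on core. Give the other three core elements pairwise distinct core rows, each containing an absorber value, with absorber columns chosen so extensionality holds. Then $b=s$ is forced, factorization becomes $a\dotop x=c\dotop x$ on core with $a\neq c$, and no ICP triple exists.

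The same device, plus one classifier and one mixed row, yields $R\not\Rightarrow D$ at $N=8$ by hand. Note that this direction also requires a classifier to be present, which your negative check does not mention.
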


\begin{proof}
Each direction is proved by exhibiting a concrete Cayley table and verifying the claimed properties by \texttt{native\_decide} (or \texttt{decide} for $N \leq 8$). The tables are:

\paragraph{$R \not\Rightarrow D$ ($N\!=\!8$).}
An $8 \times 8$ Cayley table (Appendix~\ref{app:tables}) defining an $\FRM$ with $s = 2$, $r = 3$. Element~4 is a classifier, but element~5's row on core is $(6, 2, 1, 1, 1, 1)$, mixing absorber and non-absorber outputs, violating the dichotomy. SAT confirms the non-implication is achievable as early as $N\!=\!3$, but vacuously: with only 1 core element, the dichotomy cannot be formulated (it requires both a classifier and a non-classifier). The $N\!=\!8$ witness is the smallest with enough core structure for D to be non-vacuously falsifiable. Lean file: \texttt{Countermodel.lean}.

\paragraph{$R \not\Rightarrow H$ ($N\!=\!6$, structural).}
A $6 \times 6$ Cayley table (Appendix~\ref{app:tables}) defining an extensional 2-pointed magma with a retraction pair ($s = r = 2$, an involution on core) but no $\ICP$. The core has 4 elements, more than the 3 pairwise distinct witnesses $\ICP$ requires, yet no triple satisfies the $\ICP$. As with $R \not\Rightarrow D$, the non-implication is vacuously achievable at $N\!=\!3$ (only 1 core element), but the $N\!=\!6$ witness is the smallest where $\ICP$ failure is structural rather than dimensional. Lean file: \texttt{E2PM.lean}, theorem \texttt{s\_not\_\allowbreak implies\_\allowbreak icp\_\allowbreak structural}.

\paragraph{$D \not\Rightarrow H$ ($N\!=\!4$, tight).}
The minimal R+D model at $N\!=\!4$ (\texttt{kripke4} in Appendix~\ref{app:tables}) has only 2 core elements. $\ICP$ requires 3 pairwise distinct non-absorbers, so it is vacuously unsatisfiable. Lean file: \texttt{ICP.lean}, theorem \texttt{kripke4\_no\_icp}. The $N\!=\!10$ D+R model in \texttt{Countermodels10.lean} complements this: it has 8 core elements (more than enough for $\ICP$ to be non-vacuous), yet no triple satisfies $\ICP$. The formal non-implication holds at $N\!=\!4$; the $N\!=\!10$ witness shows it persists even when H has room to exist.

\paragraph{$H \not\Rightarrow D$ ($N\!=\!5$, tight).}
A $5 \times 5$ Cayley table satisfying $\ICP$ but with no classifier: all three core elements map core to core (no element produces only absorber-valued outputs on core). The bound is tight: $\ICP$ requires $N \geq 5$. Lean file: \texttt{E2PM.lean}, theorem \texttt{h\_not\_implies\_d\_tight}.

\paragraph{$D \not\Rightarrow R$ ($N\!=\!4$, tight).}
A $4 \times 4$ Cayley table with the classifier dichotomy (classifier at index~2, non-classifier at index~3) but no retraction pair among the 4 possible assignments from core. The bound is tight: D requires $N \geq 4$. Lean file: \texttt{E2PM.lean}, theorem \texttt{d\_not\_implies\_s\_tight}.

\paragraph{$H \not\Rightarrow R$ ($N\!=\!5$, tight).}
A $5 \times 5$ Cayley table satisfying $\ICP$ but admitting no retraction pair. The bound is tight: $\ICP$ requires $N \geq 5$. Lean file: \texttt{E2PM.lean}, theorem \texttt{h\_not\_implies\_s\_tight}.

All tables are generated by Z3 SAT solver and independently verified. The Cayley tables are frozen in the repository for reproducibility.
\end{proof}

\paragraph{Methodology and minimality.}
Each counterexample was found by encoding the relevant capability axioms (and the negation of the target capability) as a SAT problem over an $n \times n$ Cayley table, then solving with Z3, incrementing $n$ from 3 upward. Once found, each table was frozen and independently verified in Lean. Four of six directions have provably tight bounds: D$\not\Rightarrow$R and D$\not\Rightarrow$H are tight at $N\!=\!4$ (the minimum where D is non-vacuous, resp.\ where $\ICP$ is dimensionally impossible); H$\not\Rightarrow$R and H$\not\Rightarrow$D are tight at $N\!=\!5$ (the minimum where $\ICP$ can be stated). The remaining two (R$\not\Rightarrow$D at $N\!=\!8$, R$\not\Rightarrow$H at $N\!=\!6$) are the smallest non-vacuous witnesses found. The SAT search was exhaustive: all smaller sizes returned UNSAT for every role assignment tested. Whether different role assignments or overlapping encodings could yield tighter witnesses is open.

\paragraph{Scaling.}
The independence is not a small-size artifact. SAT search confirms that all six non-implications are satisfiable at every size from their respective minima through $N\!=\!15$ (the solver limit). No size was found where any direction becomes UNSAT. Whether the independence holds at all sizes (for all $n \geq 5$, there exist witnesses at size $n$ for each direction) is a natural conjecture supported by this data.

\paragraph{Interpretation.}
The independence result shows that R, D, and H are genuinely three things, not one. An algebra can encode and decode without classifying (R without D). An algebra can classify without encoding (D without R). An algebra can internally compose without classifying or encoding (H without R or D). No capability algebraically forces any other. In particular, the $H \not\Rightarrow D$ direction shows that an algebra can internalize a non-trivial composition pattern without satisfying the classifier dichotomy: internal composition does not force organizational separation.

% ══════════════════════════════════════════════════════════════
\section{Coexistence}
\label{sec:bound}
% ══════════════════════════════════════════════════════════════

\begin{theorem}[No ICP at $N\!=\!4$]
\label{thm:no-icp-4}
No extensional 2-pointed magma of size $4$ satisfies the $\ICP$. The core $\{2,3\}$ has only 2 elements, but $\ICP$ requires 3 pairwise distinct non-absorbers. Lean: \texttt{Witness5.lean}, \texttt{no\_icp\_at\_4}.
\end{theorem}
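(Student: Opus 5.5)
The argument is a pure cardinality count with no case analysis on the Cayley table. Let $(S,\dotop,z_1,z_2)$ be an extensional 2-pointed magma with $|S| = 4$.

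\emph{Step 1: the core has exactly two elements.} By Definition~\ref{def:epm}, $z_1 \neq z_2$, so $|\core(S)| = |S \setminus \{z_1,z_2\}| = 4 - 2 = 2$. In the $\Fin(4)$ presentation with $z_1 = 0$ and $z_2 = 1$, this core is $\{2,3\}$. The choice of absorber labels does not matter: the count only uses that exactly two of the four elements are absorbers. Alternatively, one can first move to the standard labelling via Theorem~\ref{thm:cap-invariance}, since H is invariant under absorber-preserving isomorphism.

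\emph{Step 2: pigeonhole.} Suppose the $\ICP$ held, witnessed by $a,b,c \in \core(S)$ as in Definition~\ref{def:icp}. Then $a,b,c$ are three pairwise distinct elements of a 2-element set. The map $\{a,b,c\} \to \core(S)$ is injective only if $3 \le 2$, which is false, so this is a contradiction. Concretely, each of $a,b,c$ lies in $\{2,3\}$; by pigeonhole two of them coincide, contradicting pairwise distinctness.

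The factorization, core-preservation and non-triviality clauses are never used; only the distinctness requirement matters. There is no real obstacle here. The only point needing care is that the formal statement quantifies over arbitrary absorber positions, and Step~1 handles this uniformly. In Lean this is either a direct \texttt{decide} over the finitely many triples in $\Fin(4)$, or a \texttt{Finset.card} argument. I would use the card argument so that it is independent of the operation table.
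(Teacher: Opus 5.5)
Your proposal is correct and follows the paper's own argument: since $z_1 \neq z_2$, the core of a 4-element magma has exactly two elements, so three pairwise distinct \ICP{} witnesses cannot exist by pigeonhole. As you note, the factorization, core-preservation and non-triviality clauses are never used.
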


\begin{theorem}[Coexistence Witness]
\label{thm:bound}
The three capabilities R, D, and H are simultaneously satisfiable. A Lean-verified witness exists at $N\!=\!5$, which is optimal by Theorem~\ref{thm:no-icp-4}.
\end{theorem}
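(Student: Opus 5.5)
The plan is to prove the theorem with an explicit $5$-element Cayley table, chosen by hand so that every axiom can be checked directly. Optimality then follows immediately from Theorem~\ref{thm:no-icp-4}.

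\emph{Step 1: roles.} With $|S| = 5$, the core $\{2,3,4\}$ has exactly three elements, so the $\ICP$ triple $(a,b,c)$ must use all of them. By Theorem~\ref{thm:placement}, the retraction witnesses lie in $N$. The $\ICP$ element $b$ is core-preserving, so $b \in N$ as well. If we take $c$ to be the classifier $\tau$, then $a \dotop x = \tau \dotop (b \dotop x)$ is absorber-valued on core, so $a \in C$. This forces $N = \{b\}$ and $s = r = b$. So I fix $b = 2$, $a = 3$, $c = \tau = 4$, and $s = r = 2$, with the table
\[
\begin{array}{c|ccccc}
\dotop & 0 & 1 & 2 & 3 & 4\\ \hline
0 & 0&0&0&0&0\\
1 & 1&1&1&1&1\\
2 & 0&1&3&2&4\\
3 & 0&1&1&0&0\\
4 & 0&1&0&1&0
\end{array}
\]
Row~$2$ is an involution on core. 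Row~$4$ is the classifier. Row~$3$ is defined as $3 \dotop x := 4 \dotop (2 \dotop x)$ on core, and is extended to absorbers by $3 \dotop 0 = 0$ and $3 \dotop 1 = 1$.

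\emph{Step 2: base axioms.} Rows $0$ and $1$ are constant, so $0$ and $1$ are left-absorbers. Rows $2$, $3$, $4$ are non-constant, so no other element is a left-absorber. All five rows are pairwise distinct, which gives extensionality.

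\emph{Step 3: the three capabilities.}
\begin{itemize}
\item R: $2 \dotop (2 \dotop x) = x$ for core $x$ because $2 \mapsto 3 \mapsto 2$ and $4 \mapsto 4$. Mutual inverse holds trivially since $s = r$. Anchoring holds since $2 \dotop 0 = 0$.
\item D: $4$ sends core to $\{0,1\}$, so it is a classifier. Row $3$ sends core to $\{1,0,0\}$, also absorber-valued. Row $2$ sends core to $\{3,2,4\}$, which avoids the absorbers. So the dichotomy holds, and a non-classifier exists.
\item H: the triple $(3,2,4)$ is pairwise distinct, and $2$ is core-preserving. Factorization holds by the definition of row~$3$. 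Non-triviality holds because row~$3$ takes both values $0$ and $1$ on core.
\end{itemize}

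The main obstacle is Step~1, not the verification. A careless choice makes $a$ coincide with $\tau$ or with an absorber under extensionality, or makes row~$a$ constant on core. I avoid both problems by letting the involution $2$ swap two core points on which $\tau$ takes different values. After that choice, every check is a finite table lookup, and it can be confirmed by \texttt{decide} in Lean.
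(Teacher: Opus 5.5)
Your proposal is correct and follows the paper's proof. Both exhibit an explicit $5\times 5$ table with $s=r=2$, classifiers $3,4$, and $\ICP$ triple $(a,b,c)=(3,2,4)$, verify it entry by entry, and get optimality from Theorem~\ref{thm:no-icp-4} via pigeonhole on a core of at most two elements. The only difference is the table: the paper makes row~$2$ the identity on core, so rows $3$ and $4$ coincide on core and differ only in column~$1$, whereas your involution $2\leftrightarrow 3$ makes $a$ and $c$ differ on core, which gives a slightly less degenerate factorization.
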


\begin{proof}
A concrete $5 \times 5$ Cayley table is exhibited with $s\!=\!r\!=\!2$ (identity on core), classifiers $\tau \in \{3,4\}$, and $\ICP$ witnessed by $a\!=\!3, b\!=\!2, c\!=\!4$. The retraction pair collapses to a single element ($s\!=\!r$) which doubles as the $\ICP$'s core-preserving map; since $b$ is the identity, the factorization $a \cdot x = c \cdot (b \cdot x)$ reduces to $a \cdot x = c \cdot x$ on core, yielding two classifiers that agree on core but distinguished by their absorber columns. The lower bound is pigeonhole: at $N\!=\!4$, the core $\{2,3\}$ has only 2 elements, but $\ICP$ requires 3 pairwise distinct non-absorbers (\texttt{no\_icp\_at\_4}). Lean file: \texttt{Witness5.lean}, theorem \texttt{sdh\_witness\_5}. A $6 \times 6$ witness with $s \neq r$ is also verified (\texttt{Witness6.lean}).
\end{proof}

\begin{proposition}[Separated-Role Witness]
\label{thm:separated}
There is a witness at $N\!=\!10$ in which the principal R+D+H roles (2 absorbers, 2 retraction pair, 1 classifier, 3 $\ICP$ witnesses) occupy 8 distinct elements.
\end{proposition}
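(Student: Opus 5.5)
The plan is to prove the proposition by explicit construction on $\Fin(10)$ and then verify it by finite computation. I will fix the roles first: $z_1=0$, $z_2=1$ as absorbers; $s=2$, $r=3$ as the retraction pair; $\tau=4$ as the classifier; and $a=5$, $b=6$, $c=7$ as the $\ICP$ triple. Elements $8,9$ are spare non-classifiers. They give enough freedom to satisfy extensionality and to keep the retraction pair a genuine two-element bijection on core. The separation requirement is that these eight labels are distinct, which holds by fiat once the table satisfies every axiom with these roles.

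\textbf{Row design.} The design proceeds row by row.
\begin{itemize}
\item Rows $0,1$ are constant, as absorption requires.
\item Row $s$ is a permutation $\sigma$ of core $\{2,\dots,9\}$ without fixed-point collapse, say a product of transpositions or a cycle.
\item Row $r$ restricted to core is $\sigma^{-1}$, so both $r\cdot(s\cdot x)=x$ and $s\cdot(r\cdot x)=x$ hold. On absorbers I set $r\cdot 0=0$ for anchoring.
\item Row $\tau$ takes values in $\{0,1\}$ and hits both absorbers.
\item Row $b$ is a core-to-core permutation $\beta$, and row $c$ sends core to core as well. Row $a$ on core is then forced to be $c\circ\beta$; I choose it with at least two distinct values, which is automatic if $c$ is injective on core.
\item Rows $8,9$ are chosen with core outputs in core.
\end{itemize}
With these choices every core row is either wholly absorber-valued on core (only $\tau$) or wholly core-valued, so the classifier dichotomy holds with $C=\{4\}$ and $N$ nonempty. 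By construction $s,r\in N$, which is consistent with Theorem~\ref{thm:placement}.

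\textbf{Remaining constraints.} Two things remain to check. First, no core element may become a left-absorber; this is immediate because every core row is either non-constant or takes a core value. Second, extensionality requires all ten rows to be pairwise distinct. I will use the absorber columns $x\in\{0,1\}$, which are unconstrained for most rows, to separate any rows that might coincide on core. A likely collision is between $a$ and some other core-valued row, and the absorber columns give extra room to break it.

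\textbf{Expected obstacle and verification.} The main obstacle is avoiding accidental coincidences while keeping the eight roles disjoint. In particular, $\beta$, $\sigma$ and $\sigma^{-1}$ must not make $a$, $b$, $c$, $s$, $r$ agree, and $b$ must not equal $s$ or $r$. If a hand choice fails, I would fall back on a Z3 search with these role assignments fixed, as in the methodology of Section~\ref{sec:independence}. I would then freeze the resulting table and discharge every property by \texttt{decide} in Lean, mirroring Theorem~\ref{thm:bound}.
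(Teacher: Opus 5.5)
Your proposal is correct, and at the top level it matches the paper: the paper's proof just exhibits a SAT-found $10\times 10$ table (Appendix~\ref{app:tables}, roles $s=2$, $r=3$, $\tau=4$, $a=8$, $b=6$, $c=7$, with $5,9$ spare) and checks it by \texttt{native\_decide}.

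The difference is that your row recipe is already a complete, hand-checkable existence argument, so the Z3 fallback is unnecessary. R, D and the $\ICP$ touch the absorber columns only through $\tau$'s row and the anchoring $r\dotop 0=0$. So the absorber columns of the seven core-valued rows $s,r,a,b,c,8,9$ are free, apart from $r\dotop 0$. Giving them seven pairwise distinct pairs $(y\dotop 0, y\dotop 1)$ settles extensionality outright. $\tau$ is separated from the absorber rows because it hits both absorbers, and from the other core rows because it is absorber-valued on core.

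The two routes buy different things. The paper's table gives a concrete machine-checked artifact, and its $b$ and $c$ are not even injective on core. Your construction explains why the separated witness is easy to find. Making $s$, $r$, $b$ permutations of core and $c$ injective on core makes R, the dichotomy, and non-triviality of $a = c\circ\beta$ automatic.

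One slip to fix: your justification that no core element is a left-absorber is wrong as stated. A row constantly equal to its own (core) index does take a core value, and it is exactly a left-absorber. This is harmless for $s,r,a,b,c$, which are injective on core, and for $\tau$. Rows $8$ and $9$, however, need the extra condition of not being constant at their own index. You get this for free by putting an absorber value in one of their absorber columns when you assign the distinct pairs.
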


\begin{proof}
A concrete $10 \times 10$ Cayley table (Appendix~\ref{app:tables}) satisfies all three capabilities with distinct role assignments. Lean file: \texttt{Witness10.lean}. A witness with $s \neq r$ (non-degenerate retraction) is exhibited at $N\!=\!6$ (Appendix~\ref{app:tables}).
\end{proof}

\paragraph{Minimum cardinality.}
$N\!=\!5$ is the optimal minimum for R+D+H coexistence (Lean-verified; \texttt{Witness5.lean}). The witness uses $s = r$ (identity on core), two classifiers, and one non-classifier: the retraction pair member doubles as the ICP's core-preserving element. At $N\!=\!4$ the core has only 2 elements but ICP requires 3 pairwise distinct non-absorbers, so $N\!=\!4$ is impossible (\texttt{no\_icp\_at\_4}).

The $N\!=\!5$ witness reveals how thin the three properties can become simultaneously. The retraction pair is the identity on core ($s = r$, no actual encoding work), and the ICP factorization $a \dotop x = c \dotop (b \dotop x)$ collapses to $a \dotop x = c \dotop x$ (row equality on core between two classifiers differing only on absorber columns). The three capabilities are formally present but barely interacting. The $N\!=\!6$ witness with $s \neq r$ (\texttt{Witness6.lean}) is the more informative example: the retraction pair does genuine encoding, the ICP factorization is non-degenerate, and the roles do not trivially overlap. We retain the general definitions (allowing $s = r$) since the algebraic results (independence, injectivity, no-associativity) hold regardless.

% ══════════════════════════════════════════════════════════════
\section{Related Work}
\label{sec:related}
% ══════════════════════════════════════════════════════════════

\paragraph{Partial and total combinatory algebras.}
PCAs~\cite{vanoosten08,longley15} and TCAs~\cite{bethke88} provide the standard algebraic framework for realizability and computability. PCA theory has a rich structural theory: morphisms between PCAs, the lattice of PCA quotients, and totalizability~\cite{bethke96} all involve detailed analysis of the application operation. PCA theory does study which elements realize which functions. However, this analysis concerns representability and inter-PCA relationships, not the \emph{partition} of elements into behavioral types by their row structure. This question requires a finite total operation table to formulate; it does not arise in the PCA/TCA setting, where application is partial or the carrier is infinite. Bethke and Klop~\cite{bethke96} proved that some PCAs cannot be extended to total ones, a structural result about the relationship between partiality and totality. The K-infinity theorem (our Theorem~\ref{thm:k-infinity}) shows that nontrivial S+K algebras must be infinite, explaining why finite algebras fall outside this framework and why our setting requires a different combinator basis.

\paragraph{Turing categories.}
Cockett and Hofstra~\cite{cockett08} define a Turing category as a cartesian restriction category with a Turing object~$A$ and a Turing morphism $\bullet : A \times A \to A$ that universally simulates all morphisms in the category. The Structure Theorem establishes that PCAs and Turing categories are two views of the same concept. Turing objects have cardinality ``one or infinite''~\cite{cockett08}, ruling out nontrivial finite Turing objects. The Turing morphism is described as ``the heart of the whole structure'' but its algebraic properties (row structure, element classification, associativity constraints) are not studied. Where Turing categories distribute computational roles across distinct objects (programs, data, evaluators), our finite algebras develop the same roles internally as emergent behavioral regions within a single carrier set: absorbers encode failure, classifiers provide absorber-valued judgment, and the retraction pair provides encoding.

\paragraph{Self-interpretation.}
Mogensen~\cite{mogensen92} constructs self-interpreters for the $\lambda$-calculus. Brown and Palsberg~\cite{brown16} show that self-interpretation is possible in strongly-normalizing typed languages, contradicting earlier impossibility results. These works study self-interpretation (our R) but do not address the independence of R from D or H.

\paragraph{Independence in finite algebra.}
Independence and non-implication results between equational properties have a long history in universal algebra, from Birkhoff's characterization of equational classes~\cite{burris81} to Mal'cev conditions characterizing implications between properties of varieties. Our results operate at a different level: we separate \emph{operational} properties (retraction pairs, classifier partitions, factorizations in the left-regular representation) rather than equational identities, and our base class, the finite extensional 2-pointed magmas, is not a variety in the Birkhoff sense (extensionality is a quasi-identity, not an equation). To our knowledge, the pairwise independence of retraction pairs, classifier dichotomies, and factorization properties in finite magmas has not been previously established. The SAT-find-then-Lean-verify methodology is related to the use of automated tools such as Mace4~\cite{mccune10} for counterexample search in algebra, though we use Z3 for constraint solving and Lean~4 for independent formal verification.

\paragraph{Finite model theory and machine-checked proofs.}
Independence results via finite counterexamples are standard in finite model theory~\cite{libkin04}. Each Cayley table is frozen in the Lean source and verified by the kernel, eliminating transcription errors. The sizes involved ($N\!=\!5$ through $N\!=\!10$) are small enough for \texttt{decide}/\texttt{native\_decide} but large enough that hand-verification of all conditions would be error-prone. The universal algebraic theorems (no-associativity, decomposition invariance, ICP equivalence) are proved by pure equational reasoning without \texttt{decide}, and hold for all finite sizes.

% ══════════════════════════════════════════════════════════════
\section{Discussion}
\label{sec:discussion}
% ══════════════════════════════════════════════════════════════

\paragraph{The classifier dichotomy is organizational.}
D organizes elements into coherent categories: classifiers that test and non-classifiers that transform. The $H \not\Rightarrow D$ result shows that an algebra can internalize composition without this organization. D contributes semantic structure rather than compositional behavior.

\paragraph{Why magmas?}
Magmas are not a limitation but a natural choice, for two independent reasons. First, the K-infinity theorem (Theorem~\ref{thm:k-infinity}) shows that nontrivial combinatory algebras with $\mathbf{s}$ and $\mathbf{k}$ must be infinite: the $\mathbf{k}$ combinator is a non-surjective injection on any finite set with $\geq 2$ elements. Finite self-description requires a different combinator basis, one without a K-like projector, and our extensional 2-pointed magmas provide exactly this. Second, Theorem~\ref{thm:no-assoc} shows that associativity forces any classifier to have a constant row, collapsing it to an absorber. Combined with the no-right-identity result (Theorem~\ref{thm:no-rid}), this independently rules out the entire classical algebraic hierarchy: no semigroup, monoid, group, or ring can support the classifier dichotomy alongside a retraction pair. Self-description lives outside both the PCA/TCA framework (by cardinality) and the classical algebraic hierarchy (by associativity). This connects to the broader observation that reflective systems are inherently non-associative: $\lambda$-calculus application is non-associative ($(f\,g)\,x \neq f\,(g\,x)$ in general). The no-associativity theorem shows this is algebraically forced, not a design accident.

\paragraph{The role of extensionality.}
Extensionality (distinct elements have distinct rows) is load-bearing throughout. Without it, the no-associativity proof fails: Step~4 derives $\tau = v$ from identical rows. Without it, any constant-row element trivially satisfies both classifier and non-classifier conditions, making the three-category decomposition degenerate. Extensionality is not merely a convenience but the condition that makes element-level classification meaningful. An element's identity \emph{is} its row in the Cayley table; extensionality formalizes this.

\paragraph{Why these three properties?}
Extensionality identifies each element with its row in the Cayley table. The elements of a finite extensional 2-pointed magma are therefore a family of typed maps $\{L_a^{\core} : \core \to \core \cup Z\}$, one per element, with the absorber maps $L_{z_1}, L_{z_2}$ being constant. R, D, and H correspond to the three fundamental structural questions about such a family:
\begin{itemize}
  \item \emph{Splitting} (R): does the identity on core split through the family? That is, do some $L_s, L_r$ satisfy $L_r \circ L_s = \mathrm{id}_{\core}$? This is a section-retraction pair~\cite{maclane71}.
  \item \emph{Codomain purity} (D): does each map hit $\core$ or $Z$, never both? This is the unique coarsest non-trivial isomorphism-invariant partition of core (Corollary~\ref{cor:canonicity}).
  \item \emph{Factorization} (H): does the family contain a non-trivial composition, $L_a = L_c \circ L_b$ with $L_b$ core-preserving and $L_a$ non-constant? This is the weakest factorization condition that is not vacuously satisfiable: weakening either the pairwise-distinctness or the non-triviality requirement makes every retraction-equipped magma satisfy it trivially (Lean-verified separations in \texttt{ICP.lean}).
\end{itemize}
Each property is minimal in its own sense. R is the standard notion of splitting. D is the coarsest invariant partition. H is the weakest non-degenerate factorization. The independence theorem confirms these are genuinely three questions, not one. The precise categorical status of D and H is discussed further in the categorical connections paragraph below; whether either corresponds to a known universal property remains open.

\paragraph{Categorical connections and their limits.}
Several constructions in this paper have natural categorical readings. The core $\core(S)$ is a retract of the carrier via $(s, r)$, in the standard sense of~\cite{maclane71}. The three-category decomposition defines a functor from the category $\mathbf{E2PM}$ (extensional 2-pointed magmas with absorber-preserving homomorphisms) to the category of three-block set partitions: Theorems~\ref{thm:functoriality} and~\ref{thm:cap-invariance} establish that both the decomposition and the three capabilities are preserved by morphisms. The classifier dichotomy is a codomain-purity condition on the family of left-multiplication maps $\{L_a^{\core} : \core \to \core \cup Z\}$, partitioning them by image type.

The categorical analogy breaks down at composition. The left-multiplication maps $L_a^{\core}$ are not closed under composition: $L_a \circ L_b$ need not equal $L_c$ for any element $c$. For instance, in the $N\!=\!5$ coexistence witness, $L_2 \circ L_3$ on core gives $(0, 2, 0)$, which is not the core row of any element. This means the row family does not form a transformation monoid, and standard categorical constructions (internal hom, enrichment, Turing morphism) cannot be stated internally. The no-associativity theorem (Theorem~\ref{thm:no-assoc}) predicts this failure: associativity of the magma operation would be a prerequisite for the left-multiplication maps to form a monoid, and the theorem shows associativity is incompatible with the R+D structure. The composition-closure failure is not a gap in the formalization but a consequence of the same obstruction that motivates the setting.

\paragraph{Why the independence holds.}
The independence of D and H reflects incompatible row constraints under finiteness. A classifier's row is fully committed to absorber values on core: every core input maps to $z_1$ or $z_2$. An ICP witness $a$ requires $a \dotop x = c \dotop (b \dotop x)$ with at least two distinct values on core, so its row must be non-trivially structured. These demands are incompatible for any single element, so different elements must carry these roles. The independence holds because finiteness limits the total supply of elements and constrains which row patterns can coexist under extensionality. The $H \not\Rightarrow D$ direction has the same character: an ICP triple $(a, b, c)$ constrains three rows to satisfy a factorization equation, but this imposes no absorber-output condition on any fourth element.

\paragraph{The infinite boundary.}
In infinite carriers where the set of left-multiplication maps is unconstrained, such as the full function space over a countable set, the row-scarcity constraint dissolves. The carrier is large enough to accommodate elements with arbitrary row patterns, and the tension between classifier rows and factorization rows need not arise. Whether the independence survives in specific infinite algebraic settings is open. This is an observation about resource limitations, not a theorem about Turing completeness: the precise conditions under which the independence transfers require defining the relevant algebraic properties for infinite carriers, which is beyond this paper's scope.

\paragraph{R as a separate design choice.}
The independence of R from D and H plausibly extends beyond finite magmas. A faithful internal encoding, a section-retraction pair on the carrier, is a separate design choice from classification or composition. Mogensen's self-interpreter for the $\lambda$-calculus~\cite{mogensen92} requires an explicit quoting convention; without it, the calculus is computationally universal but lacks internal self-representation. This suggests that the R independence is not an artifact of finiteness but reflects a genuine structural distinction between having computational capabilities and having a faithful encoding of one's own elements as data.

\paragraph{Scope and limitations.}
Whether the three-capability separation appears in other settings ($\lambda$-calculus, typed systems, infinite algebras) is the primary open question. Theorem~\ref{thm:no-assoc} constrains the target: any such translation must work without associativity or identity. Each capability admits multiple axiom forms; the independence result holds for all variants. The independence also implies that any system combining all three capabilities must introduce additional axioms beyond R, D, and H to connect them; the nature of these connecting axioms is a natural direction for further work.

% ══════════════════════════════════════════════════════════════
\section{Conclusion}
% ══════════════════════════════════════════════════════════════

Two obstructions, K-infinity and no-associativity, rule out both combinatory algebras and the classical algebraic hierarchy as settings for finite algebras combining encoding with classification. In the remaining landscape of finite extensional 2-pointed magmas, we have identified three natural algebraic properties, self-representation~(R), the classifier dichotomy~(D), and the Internal Composition Property~(H), and proved they are pairwise independent. The six non-implications are machine-checked in Lean~4. The optimal coexistence witness has $N\!=\!5$ (Theorem~\ref{thm:bound}). The three-category decomposition induced by D is a functorial invariant (Theorem~\ref{thm:functoriality}), and the ICP is logically equivalent to Compose+Inert (Theorem~\ref{thm:icp-equiv}). All results are formalized in Lean~4 (93 theorems across 12 files, zero \texttt{sorry}).

\paragraph{Open problems.}
\begin{enumerate}
  \item \emph{Cross-formalism universality}: Does the three-capability separation appear in the $\lambda$-calculus, typed settings, or infinite algebras? The no-associativity obstruction (Theorem~\ref{thm:no-assoc}) constrains the target setting.
  \item \emph{Categorical characterization of H}: The ICP is the weakest non-degenerate factorization property of the left-regular representation (weakening either pairwise-distinctness or non-triviality makes it vacuous). Is it equivalent to a standard universal property (e.g., existence of a partial internal hom)?
  \item \emph{Categorical characterization of D}: D is the unique coarsest non-trivial isomorphism-invariant partition of core (Corollary~\ref{cor:canonicity}), isolating the two-valued aspect of a subobject classifier without the universality aspect. The left-multiplication maps lack composition closure, precluding standard categorical formulations. Does D correspond to a known concept in non-associative categorical algebra, or does it constitute a new structural property?
\end{enumerate}

\paragraph{Artifact.}
All Lean files, counterexample tables, and SAT reproduction scripts are available at \url{https://github.com/stefanopalmieri/finite-magma-independence}.

% ══════════════════════════════════════════════════════════════
% References
% ══════════════════════════════════════════════════════════════

% ══════════════════════════════════════════════════════════════
\appendix
% ══════════════════════════════════════════════════════════════

\section{Cayley Tables}
\label{app:tables}

All tables are extracted from the Lean source files and verified by \texttt{lake build}. Element $i$'s row is row~$i$; entry $(i,j)$ is $i \dotop j$.

\paragraph{$N\!=\!4$: Minimal R+D witness ($s = r$).}
Roles: $0 = z_1$, $1 = z_2$, $2 = \tau$, $3 = s = r$. \quad Lean: \texttt{CatKripkeWallMinimal.lean}, \texttt{kripke4}.

\medskip
{\small
\begin{tabular}{c|cccc}
$\dotop$ & 0 & 1 & 2 & 3 \\\hline
0 & 0 & 0 & 0 & 0 \\
1 & 1 & 1 & 1 & 1 \\
2 & 0 & 1 & 0 & 1 \\
3 & 0 & 0 & 2 & 3 \\
\end{tabular}}

\paragraph{$N\!=\!5$: Minimal R+D witness ($s \neq r$).}
Roles: $0 = z_1$, $1 = z_2$, $2 = s$, $3 = r$, $4 = \tau$. \quad Lean: \texttt{CatKripkeWallMinimal.lean}, \texttt{kripke5}.

\medskip
{\small
\begin{tabular}{c|ccccc}
$\dotop$ & 0 & 1 & 2 & 3 & 4 \\\hline
0 & 0 & 0 & 0 & 0 & 0 \\
1 & 1 & 1 & 1 & 1 & 1 \\
2 & 1 & 0 & 3 & 4 & 2 \\
3 & 0 & 2 & 4 & 2 & 3 \\
4 & 0 & 1 & 1 & 0 & 0 \\
\end{tabular}}

\paragraph{$N\!=\!5$: R+D+H Optimal Coexistence ($s = r$, minimal coexistence witness).}
Roles: $0 = z_1$, $1 = z_2$, $2 = s = r$ (identity on core, ICP $b$), $3 = \tau$ (classifier, ICP $a$), $4 = \tau_2$ (classifier, ICP $c$). \quad Lean: \texttt{Witness5.lean}.

\medskip
{\small
\begin{tabular}{c|ccccc}
$\dotop$ & 0 & 1 & 2 & 3 & 4 \\\hline
0 & 0 & 0 & 0 & 0 & 0 \\
1 & 1 & 1 & 1 & 1 & 1 \\
2 & 0 & 2 & 2 & 3 & 4 \\
3 & 0 & 0 & 0 & 1 & 0 \\
4 & 0 & 1 & 0 & 1 & 0 \\
\end{tabular}}

\paragraph{$N\!=\!6$: R+D+H Coexistence (role overlap).}
Roles: $0 = z_1$, $1 = z_2$, $2 = s$ (also ICP $a$), $3 = r$ (also ICP $b$), $4 = \tau$, $5 =$ ICP $c$. \quad Lean: \texttt{Witness6.lean}.

\medskip
{\small
\begin{tabular}{c|cccccc}
$\dotop$ & 0 & 1 & 2 & 3 & 4 & 5 \\\hline
0 & 0 & 0 & 0 & 0 & 0 & 0 \\
1 & 1 & 1 & 1 & 1 & 1 & 1 \\
2 & 3 & 3 & 4 & 2 & 5 & 3 \\
3 & 0 & 1 & 3 & 5 & 2 & 4 \\
4 & 0 & 0 & 1 & 0 & 1 & 1 \\
5 & 2 & 2 & 5 & 4 & 3 & 2 \\
\end{tabular}}

\paragraph{$N\!=\!8$: $R \not\Rightarrow D$ Counterexample.}
Roles: $0 = z_1$, $1 = z_2$, $2 = s$, $3 = r$, $4 =$ classifier, $5, 7 =$ mixed (violate dichotomy). \quad Lean: \texttt{Countermodel.lean}.

\medskip
{\small
\begin{tabular}{c|cccccccc}
$\dotop$ & 0 & 1 & 2 & 3 & 4 & 5 & 6 & 7 \\\hline
0 & 0 & 0 & 0 & 0 & 0 & 0 & 0 & 0 \\
1 & 1 & 1 & 1 & 1 & 1 & 1 & 1 & 1 \\
2 & 3 & 3 & 7 & 3 & 4 & 6 & 5 & 2 \\
3 & 0 & 1 & 7 & 3 & 4 & 6 & 5 & 2 \\
4 & 0 & 0 & 0 & 0 & 0 & 0 & 1 & 0 \\
5 & 6 & 2 & 6 & 2 & 1 & 1 & 1 & 1 \\
6 & 0 & 0 & 5 & 2 & 2 & 2 & 2 & 2 \\
7 & 2 & 2 & 2 & 1 & 2 & 2 & 6 & 3 \\
\end{tabular}}

\paragraph{$N\!=\!6$: $R \not\Rightarrow H$ Counterexample (structural).}
Roles: $0 = z_1$, $1 = z_2$, $2 = s = r$ (involution on core). Retraction pair holds; no triple satisfies $\ICP$ despite 4 core elements. \quad Lean: \texttt{E2PM.lean} (\texttt{sNoH6\_e2pm}).

\medskip
{\small
\begin{tabular}{c|cccccc}
$\dotop$ & 0 & 1 & 2 & 3 & 4 & 5 \\\hline
0 & 0 & 0 & 0 & 0 & 0 & 0 \\
1 & 1 & 1 & 1 & 1 & 1 & 1 \\
2 & 0 & 3 & 3 & 2 & 5 & 4 \\
3 & 2 & 4 & 5 & 5 & 1 & 4 \\
4 & 5 & 3 & 0 & 0 & 3 & 2 \\
5 & 4 & 2 & 2 & 2 & 2 & 2 \\
\end{tabular}}

\paragraph{$N\!=\!10$: $D \not\Rightarrow H$ Counterexample.}
Roles: $0 = z_1$, $1 = z_2$, $2 = s$, $3 = r$, $4 = \tau$. No triple satisfies $\ICP$. \quad Lean: \texttt{Countermodels10.lean} (\texttt{dNotH}), \texttt{ICP.lean} (\texttt{dNotH\_no\_icp}).

\medskip
{\footnotesize
\begin{tabular}{c|cccccccccc}
$\dotop$ & 0 & 1 & 2 & 3 & 4 & 5 & 6 & 7 & 8 & 9 \\\hline
0 & 0 & 0 & 0 & 0 & 0 & 0 & 0 & 0 & 0 & 0 \\
1 & 1 & 1 & 1 & 1 & 1 & 1 & 1 & 1 & 1 & 1 \\
2 & 3 & 3 & 2 & 3 & 4 & 5 & 6 & 7 & 9 & 8 \\
3 & 0 & 1 & 2 & 3 & 4 & 5 & 6 & 7 & 9 & 8 \\
4 & 0 & 0 & 1 & 1 & 1 & 1 & 1 & 1 & 1 & 1 \\
5 & 1 & 0 & 0 & 0 & 0 & 0 & 0 & 0 & 0 & 0 \\
6 & 2 & 3 & 9 & 9 & 9 & 9 & 9 & 9 & 9 & 8 \\
7 & 3 & 2 & 9 & 9 & 9 & 9 & 9 & 9 & 9 & 8 \\
8 & 1 & 0 & 1 & 0 & 1 & 1 & 1 & 1 & 0 & 0 \\
9 & 0 & 1 & 0 & 1 & 0 & 1 & 1 & 0 & 1 & 1 \\
\end{tabular}}

\paragraph{$N\!=\!10$: $H \not\Rightarrow D$ Counterexample.}
Roles: $0 = z_1$, $1 = z_2$, $2 = s$, $3 = r$, $4 = \tau$. $\ICP$ holds ($a\!=\!8, b\!=\!6, c\!=\!7$). Element~5 violates dichotomy. \quad Lean: \texttt{Countermodels10.lean} (\texttt{hNotD}).

\medskip
{\footnotesize
\begin{tabular}{c|cccccccccc}
$\dotop$ & 0 & 1 & 2 & 3 & 4 & 5 & 6 & 7 & 8 & 9 \\\hline
0 & 0 & 0 & 0 & 0 & 0 & 0 & 0 & 0 & 0 & 0 \\
1 & 1 & 1 & 1 & 1 & 1 & 1 & 1 & 1 & 1 & 1 \\
2 & 3 & 1 & 3 & 4 & 9 & 6 & 8 & 5 & 7 & 2 \\
3 & 0 & 1 & 9 & 2 & 3 & 7 & 5 & 8 & 6 & 4 \\
4 & 0 & 0 & 1 & 1 & 1 & 1 & 1 & 1 & 0 & 0 \\
5 & 0 & 0 & 2 & 0 & 0 & 0 & 0 & 0 & 3 & 1 \\
6 & 2 & 2 & 2 & 8 & 3 & 9 & 4 & 7 & 9 & 7 \\
7 & 8 & 3 & 2 & 8 & 3 & 9 & 4 & 7 & 3 & 1 \\
8 & 9 & 2 & 2 & 3 & 8 & 1 & 3 & 7 & 1 & 7 \\
9 & 2 & 2 & 2 & 2 & 4 & 7 & 6 & 7 & 2 & 0 \\
\end{tabular}}

\paragraph{$N\!=\!4$: $D \not\Rightarrow R$ Counterexample (tight).}
Element~2 is a classifier: row on core is $(1,1) \subseteq \{z_1, z_2\}$. Element~3 is a non-classifier: row on core is $(2,2) \subseteq \core$. No retraction pair exists among the 4 candidate $(s,r)$ assignments. Tight: D requires $N \geq 4$. \quad Lean: \texttt{E2PM.lean} (\texttt{d\_not\_implies\_s\_tight}).

\medskip
{\small
\begin{tabular}{c|cccc}
$\dotop$ & 0 & 1 & 2 & 3 \\\hline
0 & 0 & 0 & 0 & 0 \\
1 & 1 & 1 & 1 & 1 \\
2 & 0 & 1 & 1 & 1 \\
3 & 2 & 3 & 2 & 2 \\
\end{tabular}}

\paragraph{$N\!=\!5$: $H \not\Rightarrow R$ Counterexample (tight).}
$\ICP$ holds but no retraction pair exists. Tight: $\ICP$ requires $N \geq 5$. \quad Lean: \texttt{E2PM.lean} (\texttt{h\_not\_implies\_s\_tight}).

\medskip
{\small
\begin{tabular}{c|ccccc}
$\dotop$ & 0 & 1 & 2 & 3 & 4 \\\hline
0 & 0 & 0 & 0 & 0 & 0 \\
1 & 1 & 1 & 1 & 1 & 1 \\
2 & 3 & 1 & 0 & 3 & 1 \\
3 & 2 & 4 & 3 & 4 & 2 \\
4 & 2 & 2 & 1 & 0 & 3 \\
\end{tabular}}

\paragraph{$N\!=\!5$: $H \not\Rightarrow D$ Counterexample (tight).}
$\ICP$ holds but the dichotomy fails: no element maps core entirely to $\{z_1, z_2\}$ (no classifier exists). Tight: $\ICP$ requires $N \geq 5$. \quad Lean: \texttt{E2PM.lean} (\texttt{h\_not\_implies\_d\_tight}).

\medskip
{\small
\begin{tabular}{c|ccccc}
$\dotop$ & 0 & 1 & 2 & 3 & 4 \\\hline
0 & 0 & 0 & 0 & 0 & 0 \\
1 & 1 & 1 & 1 & 1 & 1 \\
2 & 3 & 3 & 4 & 3 & 3 \\
3 & 2 & 4 & 4 & 4 & 3 \\
4 & 2 & 2 & 2 & 4 & 4 \\
\end{tabular}}

\paragraph{$N\!=\!10$: R+D+H Coexistence Witness (all roles distinct).}
R+D+H roles: $0 = z_1$, $1 = z_2$, $2 = s$, $3 = r$, $4 = \tau$, $8 = \eta$ (ICP $a$), $6 = g$ (ICP $b$, core-preserving), $7 = \rho$ (ICP $c$). \quad Lean: \texttt{Witness10.lean}.

\medskip
{\footnotesize
\begin{tabular}{c|cccccccccc}
$\dotop$ & 0 & 1 & 2 & 3 & 4 & 5 & 6 & 7 & 8 & 9 \\\hline
0 & 0 & 0 & 0 & 0 & 0 & 0 & 0 & 0 & 0 & 0 \\
1 & 1 & 1 & 1 & 1 & 1 & 1 & 1 & 1 & 1 & 1 \\
2 & 3 & 3 & 4 & 3 & 7 & 5 & 9 & 6 & 8 & 2 \\
3 & 0 & 1 & 9 & 3 & 2 & 5 & 7 & 4 & 8 & 6 \\
4 & 0 & 0 & 1 & 1 & 1 & 0 & 0 & 0 & 1 & 1 \\
5 & 2 & 2 & 7 & 2 & 8 & 9 & 4 & 3 & 4 & 2 \\
6 & 0 & 0 & 6 & 4 & 8 & 7 & 3 & 3 & 4 & 9 \\
7 & 2 & 2 & 6 & 4 & 8 & 9 & 4 & 3 & 4 & 9 \\
8 & 2 & 2 & 4 & 8 & 4 & 3 & 4 & 4 & 8 & 9 \\
9 & 3 & 4 & 7 & 3 & 9 & 2 & 2 & 9 & 2 & 3 \\
\end{tabular}}

\section{Proof Inventory}
\label{app:inventory}

All Lean files compile with \texttt{lake build}, zero \texttt{sorry}. Proof styles: \emph{Algebraic} = pure equational reasoning (no \texttt{decide}); \emph{decide} = verified by kernel computation ($N \leq 8$); \emph{native\_decide} = verified by compiled native code ($N = 10$).

\medskip
{\small
\begin{tabular}{lrcl}
\textbf{File} & \textbf{Thms} & \textbf{Style} & \textbf{Content} \\
\hline
CatKripkeWallMinimal & 17 & Algebraic & Decomposition, bounds, wall \\
NoCommutativity      &  3 & Algebraic & Asymmetry \\
Functoriality        &  4 & Algebraic & Decomposition invariance \\
CapabilityInvariance &  4 & Algebraic & R, D, H each invariant \\
ICP                  & 20 & Alg.+decide & ICP $\Leftrightarrow$ Compose+Inert \\
Countermodel         &  5 & decide & $R \not\Rightarrow D$ ($N\!=\!8$) \\
Countermodels10      &  9 & native\_dec. & $D \not\Rightarrow H$, $H \not\Rightarrow D$ \\
E2PM                 & 19 & decide & $D \not\Rightarrow R$, $H \not\Rightarrow R$, $H \not\Rightarrow D$, $R \not\Rightarrow H$ (tight) \\
Witness10            &  6 & native\_dec. & R+D+H at $N\!=\!10$ \\
Witness5             &  3 & decide & R+D+H at $N\!=\!5$, no ICP at $N\!=\!4$ \\
Witness6             &  3 & decide & R+D+H at $N\!=\!6$ \\
\hline
\textbf{Total}       & \textbf{93} & & \\
\end{tabular}}


\begin{thebibliography}{20}

\bibitem{smith84}
Brian Cantwell Smith.
\newblock Reflection and semantics in {Lisp}.
\newblock In \emph{POPL}, pages 23--35, 1984.

\bibitem{mogensen92}
Torben Mogensen.
\newblock Efficient self-interpretation in lambda calculus.
\newblock \emph{Journal of Functional Programming}, 2(3):279--291, 1992.

\bibitem{brown16}
Matt Brown and Jens Palsberg.
\newblock Breaking through the normalization barrier: A self-interpreter for {F-omega}.
\newblock In \emph{POPL}, pages 5--17, 2016.

\bibitem{maclane71}
Saunders Mac~Lane.
\newblock \emph{Categories for the Working Mathematician}.
\newblock Springer, 1971.

\bibitem{johnstone02}
Peter~T. Johnstone.
\newblock \emph{Sketches of an Elephant: A Topos Theory Compendium}, vol.~1.
\newblock Oxford University Press, 2002.

\bibitem{libkin04}
Leonid Libkin.
\newblock \emph{Elements of Finite Model Theory}.
\newblock Springer, 2004.

\bibitem{burris81}
Stanley Burris and H.~P. Sankappanavar.
\newblock \emph{A Course in Universal Algebra}.
\newblock Springer, 1981.

\bibitem{mccune10}
William McCune.
\newblock \emph{Prover9 and Mace4}.
\newblock \url{https://www.cs.unm.edu/~mccune/prover9/}, 2005--2010.

\bibitem{moura21}
Leonardo de~Moura and Sebastian Ullrich.
\newblock The {Lean}~4 theorem prover and programming language.
\newblock In \emph{CADE}, pages 625--635, 2021.

\bibitem{vanoosten08}
Jaap van~Oosten.
\newblock \emph{Realizability: An Introduction to its Categorical Side}.
\newblock Studies in Logic and the Foundations of Mathematics~152, Elsevier, 2008.

\bibitem{longley15}
John Longley and Dag Normann.
\newblock \emph{Higher-Order Computability}.
\newblock Theory and Applications of Computability, Springer, 2015.

\bibitem{bethke88}
Ingemarie Bethke.
\newblock \emph{Notes on Partial Combinatory Algebras}.
\newblock PhD thesis, Universiteit van Amsterdam, 1988.

\bibitem{bethke96}
Ingemarie Bethke and Jan~Willem Klop.
\newblock Collapsing partial combinatory algebras.
\newblock In \emph{Higher-Order Algebra, Logic, and Term Rewriting (HOA 1995)}, LNCS~1074, pages 16--48, Springer, 1996. \textsc{doi}: 10.1007/3-540-61254-8\_18.

\bibitem{cockett08}
J.~Robin~B. Cockett and Pieter Hofstra.
\newblock Introduction to {T}uring categories.
\newblock \emph{Annals of Pure and Applied Logic}, 156(2--3):183--209, 2008. \textsc{doi}: 10.1016/j.apal.2008.04.005.

\end{thebibliography}
\end{document}